\documentclass[12pt,a4paper]{amsart}

\usepackage{amsmath,amsfonts,amssymb}
\usepackage{color}
\usepackage[hmargin=2cm,vmargin=2cm]{geometry}

\usepackage[hypertexnames=false,hyperfootnotes=false,colorlinks=true,linkcolor=blue,%
citecolor=purple,filecolor=magenta,urlcolor=cyan,unicode,linktocpage=true,pagebackref=false]{hyperref}
\usepackage{nameref,zref-xr}

\usepackage[all]{xy}           

\usepackage{cancel}

\usepackage{tikz}

\newcommand{\mbZ}{\mathbb Z}
\newcommand{\mbC}{\mathbb C}

\newcommand{\oM}{\overline{\mathcal M}}

\def\oM{{\overline{\mathcal{M}}}}
\def\CP{{{\mathbb C}{\mathbb P}}}

\def\mbQ{{\mathbb Q}}
\def\d{{\partial}}

\newcommand{\eps}{\varepsilon}

\newcommand{\DR}{\mathrm{DR}}

\DeclareMathOperator{\Aut}{Aut}

\newcommand{\mcW}{\mathcal{W}}

\newcommand{\mcP}{\mathcal{P}}

\def\g{\mathfrak{g}}

\def\d{\partial}
\def\n{\noindent}

\newcommand{\beq}{\begin{equation}}
\newcommand{\eeq}{\end{equation}}
\newcommand{\beqn}{\begin{equation*}}
\newcommand{\eeqn}{\end{equation*}}

\newcommand{\oG}{\overline{G}}
\newcommand{\oH}{\overline{H}}

\newcommand{\oR}{\overline{R}}

\newcommand{\lb}{\left(}
\newcommand{\rb}{\right)}

\newcommand{\oa}{\overline{a}}

\newcommand{\KdV}{\mathrm{KdV}}

\newcommand{\mbR}{\mathbb{R}}
\newcommand{\Pol}{\mathrm{Pol}}

\newtheorem{theorem}{Theorem}[section]
\newtheorem{proposition}[theorem]{Proposition}
\newtheorem{lemma}[theorem]{Lemma}

\newtheorem{conjecture}[theorem]{Conjecture}

\theoremstyle{definition}

\newtheorem{remark}[theorem]{Remark}

\numberwithin{equation}{section}

\begin{document}

\title[Quantum hierarchies of the ILW and BO equations]{Quantum hierarchies of the ILW and BO equations and the DR cycles on the moduli spaces of curves}

\author{Alexandr Buryak}
\address{A. Buryak:\newline 
Faculty of Mathematics, National Research University Higher School of Economics, Usacheva str. 6, Moscow, 119048, Russian Federation;\smallskip\newline 
Skolkovo Institute of Science and Technology, Bolshoy Boulevard 30, bld. 1, Moscow, 121205, Russian Federation}
\email{aburyak@hse.ru}

\author{Abner Hernandez Rodriguez}
\address{A. Hernandez Rodriguez:\newline 
Faculty of Mathematics, National Research University Higher School of Economics, Usacheva str. 6, Moscow, 119048, Russian Federation}
\email{ernandes.r@hse.ru}

\begin{abstract}
In an earlier paper of P. Rossi and the first author, a construction of a wide class of quantum hierarchies was presented, with the input data given by an arbitrary cohomological field theory. A crucial role in the construction is played by the double ramification (DR) cycles on the moduli spaces of stable curves. As a special case, one gets a quantization of the hierarchy of the Intermediate Long Wave (ILW) equation. The ILW equation has two famous limits: the Korteweg--de Vries equation and the Benjamin--Ono (BO) equation, and for the latter one a quantum hierarchy was constructed by Nazarov and Sklyanin. In our paper, we propose a conjecture describing a relation between a limit of the quantum ILW hierarchy and the Nazarov--Sklyanin hierarchy and present various checks to support it.         
\end{abstract}

\date{\today}

\maketitle

\section{Introduction}

The cohomology of the moduli spaces of stable curves $\oM_{g,n}$ is deeply related to the theory of integrable systems. The Witten conjecture~\cite{Wit91}, proved by M. Kontsevich~\cite{Kon92}, provided a bright manifestation of this relation opening a new direction of research. The cohomology of~$\oM_{g,n}$ can be viewed as a tool for a construction of integrable systems, and the crucial role here is played by the notion of a \emph{cohomological field theory (CohFT)} that is a collection of classes in the cohomology of $\oM_{g,n}$ satisfying certain compatibility conditions with respect to gluing and forgetful maps between the moduli spaces. In~\cite{Bur15a}, using remarkable properties of the \emph{double ramification (DR) cycles} $\DR_g(a_1,\ldots,a_n)\in H^{2g}(\oM_{g,n})$, $a_1,\ldots,a_n\in\mbZ$, the author proposed a construction of a Hamiltonian hierarchy of evolutionary PDEs with one spatial variable, called the \emph{DR hierarchy}, associated to an arbitrary CohFT. Many known integrable hierarchies appear as the DR hierarchies corresponding to certain CohFTs or their generalizations. Note, however, that it is still not known how to determine whether a given hierarchy of evolutionary PDEs belongs to the class of DR hierarchies or not, although there are several conjectures in the case when there is only one dependent variable~\cite{BR25,BXY26}. 

\medskip

In~\cite{BR16}, the authors constructed a quantization of the DR hierarchy, called the \emph{quantum DR hierarchy}, corresponding to an arbitrary CohFT. The quantum DR hierarchy is completely determined by a particular Hamiltonian, which we call the \emph{key Hamiltonian}, using a certain recursion procedure~\cite{BR16}. In~\cite{BR16,dCR19,BDGR20}, the key quantum Hamiltonians associated to various CohFTs were computed, providing quantizations of the ILW hierarchy, the extended Toda hierarchy, the Drinfeld--Sokolov hierarchy corresponding to the $D_4$-singularity, the $r$-KdV hierarchy for $r=3$ and $r=4$. In~\cite{RY21,vIR24,vIR25}, the authors studied the eigenvectors and the eigenvalues of the quantum KdV hierarchy. For the quantum DR hierarchies, there is a notion of \emph{quantum tau-function}, which was introduced in~\cite{BDGR20}. Various results and conjectures about it for the quantum KdV hierarchy were proposed in~\cite{Blo22}, for further developments see~\cite{BB24,BLS24}.

\medskip

In the case of trivial CohFT, the corresponding quantum DR hierarchy is a quantization of the KdV hierarchy, and the first two Hamiltonians are
\begin{align*}
&\oG^\KdV_1=\int:\left(\frac{u^3}{6}+\frac{\eps^2}{24}u u_{xx}\right):dx,\\
&\oG^\KdV_2=\int:\left(\frac{u^4}{24}+\eps^2\frac{u^2 u_2}{48}+\eps^4\frac{u u_4}{480}-i\hbar\frac{2uu_2+u^2}{48}\right):dx,
\end{align*}
where $u=\sum_{a\in\mbZ\backslash\{0\}}p_a e^{iax}$, $[p_a,p_{-a}]=i\hbar a$, $\eps$ and $\mu$ are additional formal variables, we use the normal ordering $:\cdot:$ that moves all the $p_a$'s with positive $a$ to the right and all the $p_a$'s with negative $a$ to the left, and the integration sign means that we extract the coefficient of $e^{i0x}$. At the dispersionless level the same quantization already appeared before in~\cite{Pog03,Eli07}. 

\medskip

There is a one-parameter deformation of the trivial CohFT given by the full Chern class of the Hodge bundle over $\oM_{g,n}$. According to~\cite{BR16}, the corresponding quantum DR hierarchy is a quantization of the ILW hierarchy, and the key quantum Hamiltonian is equal to
\begin{align}
\oG_1=&\int:\left(\frac{u^3}{6}+(\eps^2-i\hbar\mu)\sum_{g\ge 1}\eps^{2g-2}\mu^{g-1}\frac{|B_{2g}|}{2(2g)!}uu_{2g}\right):dx=\label{eq:formula for oG1}\\
=&\frac{1}{2}\sum_{a,b\ge 1}(p_{-a-b}p_ap_b+p_{-a}p_{-b}p_{a+b})-(\eps^2-i\hbar\mu)\sum_{g\ge 1}\eps^{2g-2}\mu^{g-1}\frac{B_{2g}}{(2g)!}\sum_{a\ge 1}a^{2g}p_{-a}p_a,\notag
\end{align}
where $B_{2g}$ are the Bernoulli numbers. In this paper we are interested in a limit of $\oG_1$, which is sometimes called the deep water limit. Using that $\sum_{n\ge 0}\frac{B_{2n}}{(2n)!}x^{2n}=\frac{x}{2}\frac{e^x+1}{e^x-1}$, we can express $\oG_1$ in the following way:
\beqn
\oG_1=\frac{1}{2}\sum_{a,b\ge 1}(p_{-a-b}p_ap_b+p_{-a}p_{-b}p_{a+b})+(\eps^2-i\hbar\mu)\sum_{a\ge 1}\frac{1-\frac{a\eps\sqrt\mu}{2}\frac{e^{a\eps\sqrt\mu}+1}{e^{a\eps\sqrt\mu}-1}}{\eps^2\mu}p_{-a}p_{a}.
\eeqn
We see that after considering $\mu$ as a positive real number and making the substitution $\eps=\sqrt\mu$ there is a well-defined limit when $\mu\mapsto+\infty$:
\beqn
\oG_1^{\lim}:=\lim_{\mu\to+\infty}\lb\left.\oG_1\right|_{\eps=\sqrt\mu}\rb=\frac{1}{2}\sum_{a,b\ge 1}(p_{-a-b}p_ap_b+p_{-a}p_{-b}p_{a+b})+\frac{i\hbar-1}{2}\sum_{a\ge 1}a p_{-a}p_a.
\eeqn
The operator $2\oG_1^{\lim}$ coincides with the first nontrivial Hamiltonian of the quantum hierarchy of the Benjamin--Ono (BO) equation from the paper~\cite{NS13}: see the operator~(1.1) there and note that we should identify our variables $p_n$ and $p_{-n}$, $n>0$, with the variables $p_n^*$ and $p_{-n}$, respectively, from~\cite{NS13}, and also identify $i\hbar=\alpha$. It is natural to study whether the limits of the higher Hamiltonians $\oG_d$ of the quantum ILW hierarchy exist, and, if yes, how to relate them to the higher Hamiltonians of the quantum BO hierarchy of Nazarov and Sklyanin.  

\medskip

The main results of the paper are the following:
\begin{enumerate}
\item We construct a triangular linear transformation of the operators $\oG_d':=\left.\oG_d\right|_{\eps=\sqrt\mu}$, after which the resulting operators $\oG_d^{\bullet}$ conjecturally have a limit when $\mu\to+\infty$.  

\smallskip

\item We construct a triangular polynomial transformation of the operators $\lim_{\mu\to+\infty}\oG_d^{\bullet}$, which conjecturally relates them to the operators of the quantum BO hierarchy.

\smallskip

\item We check our conjectures for the operators $\oG_1,\oG_2,\oG_3$.

\smallskip

\item We check our conjectures when $i\hbar=1$.
\end{enumerate}

\medskip

\subsection*{Acknowledgements} This work is an output of the research project HSE-BR-2025-059 implemented as a part of the Basic Research Program at the National Research University Higher School of Economics (HSE University). The authors are grateful to B. Feigin, A. Litvinov, and M. Matushko for useful discussions.

\medskip

\section{The quantum ILW and BO hierarchies and our conjecture}

\subsection{Cohomology classes on the moduli space of curves $\oM_{g,n}$}

We will work with the moduli spaces $\oM_{g,n}$ of stable algebraic curves of genus $g$ with $n$ marked points. Note that $\oM_{g,n}$ is nonempty only when the condition $2g-2+n>0$ is satisfied. We will often omit mentioning this condition explicitly and assume that it is satisfied when a moduli space is considered. We will use the following standard cohomology classes (in singular cohomology with complex coefficients) on $\oM_{g,n}$:
\begin{itemize}
\item The psi-class $\psi_i\in H^2(\oM_{g,n})$, $1\le i\le n$, is the first Chern class of the line bundle over~$\oM_{g,n}$ formed by the cotangent lines at the $i$-th marked point of stable curves.

\smallskip

\item The Hodge class $\lambda_j:= c_j(\mathbb E)\in H^{2j}(\oM_{g,n})$, $j\ge 0$, where~$\mathbb E$ is the rank~$g$ Hodge vector bundle over~$\oM_{g,n}$ whose fibers are the spaces of holomorphic one-forms on stable curves. We define $\Lambda(\mu):=\sum_{i=0}^g\lambda_i\mu^i$. We will use the property $\Lambda(\mu)\Lambda(-\mu)=1$.

\smallskip

\item The \emph{double ramification (DR) cycle} $\DR_g(a_1,\ldots,a_n)\in H^{2g}(\oM_{g,n})$, $a_1,\ldots,a_n\in\mbZ$, $\sum a_i=0$, is defined as follows. There is a moduli space of projectivized stable maps to~$\CP^1$ relative to~$0$ and~$\infty$, with ramification profile over $0$ given by the negative numbers among the $a_i$-s, ramification profile over $\infty$ given by the positive numbers among the $a_i$-s, and the zeros among the $a_i$-s correspond to additional marked points (see, e.g.,~\cite{BSSZ15} for more details). This moduli space is endowed with a virtual fundamental class, which lies in the homology group of degree $2(2g-3+n)$. The DR cycle~$\DR_g(a_1,\ldots,a_n)$ is the Poincar\'e dual to the pushforward, through the forgetful map to~$\oM_{g,n}$, of this virtual fundamental class. An important property of the DR cycle is that $\DR_g\left(-\sum b_i,b_1,\ldots,b_n\right)$ is an even polynomial in $b_1,\ldots,b_n$ of degree~$2g$ (nonhomogeneous in general) with the coefficients from the vector space $H^{2g}(\oM_{g,n+1})$~\cite{Spe26}. Note also that $\DR_0(a_1,\ldots,a_n)=1\in H^0(\oM_{0,n})$.
\end{itemize}

\medskip

\subsection{Quantum ILW hierarchy}

Consider formal variables $q_1,q_2,\ldots$. The set of all partitions will be denoted by $\mcP$. For a partition $\lambda=(\lambda_1,\ldots,\lambda_l)$, let 
$$
q_\lambda:=\prod_{i=1}^l q_{\lambda_i},\quad \frac{\d}{\d q_\lambda}:=\frac{\d^l}{\d q_{\lambda_1}\cdots\d q_{\lambda_l}},\quad m_k(\lambda):=|\{1\le i\le l|\lambda_i=k\}|,\quad \Aut(\lambda):=\prod_{i\ge 1}m_i(\lambda)!.
$$
We denote by $\mcW$ the algebra of operators of the form
\beq\label{eq:operators}
\sum_{\lambda\in\mcP}q_\lambda P_\lambda\lb\frac{\d}{\d q_1},\frac{\d}{\d q_2},\ldots\rb,
\eeq
where $P_\lambda$ are polynomials with complex coefficients. These operators act in the algebra $\mbC[[q_1,q_2,\ldots]]$. For an integer $p\ge 0$, denote by $\mcW^{(p)}$ the vector subspace of $\mcW$ formed by the operators of the form
\beqn
\sum_{\substack{\lambda,\mu\in\mcP\\l(\lambda)+l(\mu)=p}}c_{\lambda,\mu} q_\lambda \frac{\d}{\d q_\mu},\quad c_{\lambda,\mu}\in\mbC,
\eeqn
and let $\mcW^{(\le p)}:=\oplus_{i=0}^p\mcW^{(i)}$. Denote by $\pi_p\colon\mcW\to\mcW^{(p)}$ the natural projection. We will also consider operators of the form~\eqref{eq:operators} where the coefficients of the polynomials $P_\lambda$ belong to the algebra of formal power series in some variables, say $\eps_1,\eps_2,\ldots,\eps_k$. The algebra of such operators will be denoted by $\mcW_{\eps_1,\ldots,\eps_k}$.

\medskip

The operators of the quantum ILW hierarchy belong to the algebra $\mcW_{\hbar,\eps,\mu}$ and are defined as follows:
\beqn
\oG_d:=\hspace{-0.3cm}\sum_{\substack{n\ge 2\\a_1,\ldots,a_n\in\mbZ\\\sum a_i=0}}\sum_{g\ge 0}\frac{(i\hbar)^g}{n!}\left(\int_{\oM_{g,n+1}}\hspace{-0.3cm}\DR_g\left(0,a_1,\ldots,a_n\right)\Lambda\left(\frac{-\eps^2}{i \hbar}\right)\Lambda(\mu) \psi_1^d\right):p_{a_1}\cdots p_{a_n}:,\quad d\ge 0.
\eeqn
where
$$
p_a:=
\begin{cases}
q_{-a},&\text{if $a<0$},\\
ia\hbar\frac{\d}{\d q_a},&\text{if $a>0$},\\
0,&\text{if $a=0$}.
\end{cases}
$$
For example, $\oG_0=\frac{1}{2}\sum_{a\in\mbZ\backslash\{0\}}:p_a p_{-a}:=\sum_{a\ge 1}q_a p_a$, and the operator $\oG_1$ is given by formula~\eqref{eq:formula for oG1}, which follows from the equality (see, e.g.,~\cite[Section 4.2]{BR16})
\beqn
\int_{\oM_{g,3}}\DR_g(0,a,-a)\psi_1\lambda_g\lambda_{g-1}=a^{2g}\frac{|B_{2g}|}{(2g)!},\quad g\ge 1.
\eeqn
The operators $\oG_d$ pairwise commute:
$$
[\oG_{d_1},\oG_{d_2}]=0,\quad d_1,d_2\ge 0.
$$

\medskip

Note that the formula for $\oG_d$ involves the integrals
$$
\int_{\oM_{g,n+1}}\DR_g\left(0,a_1,\ldots,a_n\right)\lambda_{g-j}\lambda_{g-k}\psi_1^d,
$$
which by degree counting is zero unless $j+k=d+2-n$. So the operator $\oG_d$ has the form
$$
\oG_d=\hspace{-0.3cm}\sum_{\substack{2\le n\le d+2\\a_1,\ldots,a_n\in\mbZ\\\sum a_i=0}}\sum_{\substack{j+k=d+2-n\\g\ge\max(j,k)}}\hspace{-0.3cm}\frac{(-\eps^2)^{g-j}(i\hbar)^j\mu^{g-k}}{n!}\left(\int_{\oM_{g,n+1}}\hspace{-0.3cm}\DR_g(0,a_1,\ldots,a_n)\lambda_{g-j}\lambda_{g-k}\psi_1^d\right):p_{a_1}\cdots p_{a_n}:,
$$
and, in particular, it belongs to $\mcW^{(\le d+2)}_{\hbar,\eps,\mu}$. Because of the polynomiality property of the DR cycles, the operator $\oG_d$ has the form
\beq\label{eq:oGd in terms of differential polynomials}
\oG_d=\int:\frac{u^{d+2}}{(d+2)!}:dx+\sum_{\substack{1\le j+k\le d\\g\ge\max(j,k)}}(-\eps^2)^{g-j}(i\hbar)^j\mu^{g-k}\int:\frac{Q_{d,g,j,k}(u,u_x,u_{xx},\ldots)}{(d+2-j-k)!}: dx,
\eeq
where $Q_{d,g,j,k}$ is a polynomial in $u,u_x,u_{xx},\ldots$ that is a linear combination of monomials of the form $u_{d_1}\cdots u_{d_n}$, where $u_d:=\d_x^d u$, $n=d+2-j-k$, and $\sum d_i\le 2g$.

\medskip

\subsection{Quantum BO hierarchy of Nazarov and Sklyanin}

In~\cite{NS13}, the authors introduced the following operators:
$$
I^{(k)}:=\sum_{i_1,\ldots,i_k\ge 1}q_{i_1}L_{i_1i_2}\cdots L_{i_{k-1}i_k}p_{i_k},\quad k\ge 1,
$$
where 
$$
L_{ij}:=
\begin{cases}
p_{i-j},&\text{if $i\ne j$},\\
(\alpha-1)i,&\text{if $i=j$},
\end{cases}\qquad \alpha:=i\hbar.
$$
Note that in~\cite{NS13} the authors denote $q^*_n:=\alpha n\frac{\d}{\d q_n}$, $n\ge 1$. For example,
\begin{align}
&I^{(1)}=\sum_{n\ge 1}q_n p_n,\notag\\
&I^{(2)}=\sum_{a,b\ge 1}(q_a q_b p_{a+b}+q_{a+b}p_{a}p_{b})+(\alpha-1)\sum_{a\ge 1}a q_a p_{a},\notag\\
&I^{(3)}=\sum_{i,j,k\ge 1}q_i p_{i-j}p_{j-k}p_k+(\alpha-1)\sum_{i,j\ge 1}(i+j)q_{i}p_{i-j}p_j+(\alpha-1)^2\sum_{i\ge 1}i^2 q_i p_i,\label{eq:formula for I3}\\
&I^{(4)}=\sum_{i,j,k,l\ge 1}q_i p_{i-j}p_{j-k}p_{k-l}p_l+(\alpha-1)\sum_{i,j,k\ge 1}(i+j+k)q_i p_{i-j}p_{j-k}p_k+\notag\\
&\hspace{1.1cm}+(\alpha-1)^2\sum_{i,j\ge 1}(i^2+ij+j^2)q_{i}p_{i-j}p_j+(\alpha-1)^3\sum_{i\ge 1}i^3 q_i p_i,\notag
\end{align}
where we recall that $p_0=0$. By~\cite[Theorem 2]{NS13}, the Hamiltonians $I^{(k)}$ pairwise commute: 
$$
[I^{(k_1)},I^{(k_2)}]=0,\quad k_1,k_2\ge 1.
$$

\medskip

\subsection{Conjecture}

Let us denote
$$
\oG_d':=\oG_d\big|_{\eps\mapsto\sqrt\mu}.
$$
We see that $\oG'_0=\oG_0=I^{(1)}=\sum_{a\ge 1}q_a p_a$ and 
\beqn
\oG'_1=\frac{1}{2}\sum_{a,b\ge 1}(q_{a+b}p_ap_b+q_{a}q_{b}p_{a+b})+(1-i\hbar)\sum_{a\ge 1}\lb\mu^{-1}-\frac{a}{2}\frac{e^{a\mu}+1}{e^{a\mu}-1}\rb q_a p_a.
\eeqn
In general, the operator $\oG_d'$ has the form
$$
\oG_d'=\sum_{j=0}^d(i\hbar)^j\sum_{n=2}^{d+2-j}\sum_{\substack{\lambda,\rho\in\mcP\\l(\lambda)+l(\rho)=n\\|\lambda|=|\rho|}}h_{d,j,\lambda,\rho}(\mu)q_\lambda p_\rho,
$$
where $h_{d,j,\lambda,\rho}(\mu)\in\mbQ[[\mu]]$. We see that for $d=1$ the formal power series $h_{1,j,\lambda,\rho}(\mu)$ is the power series expansion at $0$ of a function from $\mbQ(\mu,e^{\mu})$, and moreover this function has a limit when $\mu\in\mbR$ and $\mu\to+\infty$. However, for $d\ge 3$ it can happen that the radius of convergence of~$h_{d,j,\lambda,\rho}(\mu)$ is zero, and in this case we don't know a reasonable way to take the limit when $\mu\to+\infty$. In Section~\ref{subsection:G3 Hamiltonian} we will compute $\oG_3'$ and show it, see Remark~\ref{remark:divergent term}. However, here we want to present a short computation that we believe is related to the problem with a limit and also shows how to overcome it. 

\medskip

By~\eqref{eq:oGd in terms of differential polynomials}, we have
$$
\oG_d'=\int:\frac{u^{d+2}}{(d+2)!}:dx+\sum_{\substack{1\le j+k\le d\\g\ge\max(j,k)}}(-1)^{g-j}(i\hbar)^j\mu^{2g-j-k}\int:\frac{Q_{d,g,j,k}}{(d+2-j-k)!}:dx.
$$
Let $Q_{d,g,j,k}^\circ:=\left.Q_{d,g,j,k}\right|_{u_{\ge 1}=0}$ and 
$$
\oG_d^\circ:=\int:\frac{u^{d+2}}{(d+2)!}:dx+\sum_{\substack{1\le j+k\le d\\g\ge\max(j,k)}}(-1)^{g-j}(i\hbar)^j\mu^{2g-j-k}\int:\frac{Q^\circ_{d,g,j,k}}{(d+2-j-k)!}:dx.
$$
Using that $\DR_g(0,\ldots,0)=(-1)^g\lambda_g$, we compute
\begin{align*}
\oG_d^\circ=&\int:\frac{u^{d+2}}{(d+2)!}:dx+\sum_{\substack{j,k\ge 1\\j+k\le d\\g\ge\max(j,k)}}(-i\hbar)^j\mu^{2g-j-k}\int_{\oM_{g,1}}\lambda_g\lambda_{g-j}\lambda_{g-k}\psi_1^{j+k-2}\int:\frac{u^{d+2-j-k}}{(d+2-j-k)!}:dx=\\
=&\int:\frac{u^{d+2}}{(d+2)!}:dx+\sum_{\substack{j,k\ge 1\\j+k\le d}}(-i\hbar)^jf_{j,k}(\mu)\int:\frac{u^{d+2-j-k}}{(d+2-j-k)!}:dx,
\end{align*}
where
$$
f_{j,k}(\mu):=\sum_{g\ge\max(j,k)}\mu^{2g-j-k}\int_{\oM_{g,1}}\lambda_g\lambda_{g-j}\lambda_{g-k}\psi_1^{j+k-2},\quad j,k\ge 1.
$$
For $j=k=1$, we have $f_{1,1}(\mu)=\sum_{g\ge 1}\mu^{2g-2}\int_{\oM_{g,1}}\lambda_g\lambda_{g-1}^2=\frac{1}{24}$. However, we believe that if $\max(j,k)\ge 2$, then the radius of convergence of the formal power series $f_{j,k}(\mu)$ is zero. For example, using that (see~\cite[eq.~(9)]{FP00})
$$
\int_{\oM_g}\lambda_g\lambda_{g-1}\lambda_{g-2}=\frac{1}{2(2g-2)!}\frac{|B_{2g-2}|}{2g-2}\frac{|B_{2g}|}{2g},\quad g\ge 2,
$$
we compute
$$
f_{1,2}(\mu)=\sum_{g\ge 2}\frac{|B_{2g-2}||B_{2g}|}{4g(2g-2)!}\mu^{2g-3}.
$$
Since $|B_{2n}|=\frac{2(2n)!}{(2\pi)^{2n}}\zeta(2n)\geq \frac{(2n)!}{(2\pi)^{2n}}$, $n\geq 1$, we obtain
$$
\frac{|B_{2g-2}||B_{2g}|}{4g(2g-2)!}\ge \frac{2(2g-1)!}{(2\pi)^{4g-2}},
$$
and therefore the radius of convergence of $f_{1,2}(\mu)$ is zero. This motivates us to make a triangular linear combination of the Hamiltonians $\oG_d'$ in order to kill all the terms $\int:\frac{u^{d+2-j-k}}{(d+2-j-k)!}:dx$.

\medskip

Now we are ready to formulate a precise conjecture. Define formal power series $A,B\in 1+z^2\mbQ[[i\hbar,\mu,z]]$ by
$$
A:=1+\sum_{j,k\ge 1}(-i\hbar)^jf_{j,k}(\mu)z^{j+k},\qquad B:=\frac{1}{A}=1+\sum_{d\ge 2}b_d(\hbar,\mu)z^d,
$$
and define \emph{normalized Hamiltonians} $\oG_d^\bullet$ by
$$
\oG_d^\bullet:=\oG_d'+\sum_{i=2}^d b_d(\hbar,\mu)\oG'_{d-i},\quad d\ge 0.
$$
For example, we have
\begin{align*}
&\oG_0^{\bullet}=\oG_0',\\
&\oG_1^{\bullet}=\oG_1',\\
&\oG_2^{\bullet}=\oG_2'+\frac{i\hbar}{24}\oG_0',\\
&\oG_3^{\bullet}=\oG_3'+\frac{i\hbar}{24}\oG_1'+i\hbar(1-i\hbar)f_{1,2}(\mu)\oG_0'.
\end{align*}
Note that $b_2(\hbar,\mu)=\frac{i\hbar}{24}$ and $\oG_d^{\bullet}=\oG_d'+\frac{i\hbar}{24}\oG_{d-2}'+\ldots$. In general, the normalized Hamiltonian~$\oG_d^\bullet$ has the form
$$
\oG_d^\bullet=\sum_{j=0}^d(i\hbar)^j\sum_{n=2}^{d+2-j}\sum_{\substack{\lambda,\rho\in\mcP\\l(\lambda)+l(\rho)=n\\|\lambda|=|\rho|}}h^\bullet_{d,j,\lambda,\rho}(\mu)q_\lambda p_\rho,\quad h^\bullet_{d,j,\lambda,\rho}(\mu)\in\mbQ[[\mu]].
$$

\medskip

\begin{conjecture}{\ }
\begin{enumerate}
\item Each formal power series $h^\bullet_{d,j,\lambda,\rho}(\mu)$ is the power series expansion at $0$ of a function from~$\mbQ(\mu,e^\mu)$. Moreover, this function has a limit, when $\mu\in\mbR$ and $\mu\to+\infty$. 

\smallskip

\item Denote $\oG_d^{\lim}:=\lim_{\mu\in\mbR,\,\mu\to+\infty}\oG_d^\bullet$. Then we have the following relation between generating series:
$$
\sum_{k\ge 1}I^{(k)}u^{-k}=u\bigg( 1-\exp\bigg(-\sum_{d\ge 2}(d-1)!\oG^{\lim}_{d-2}u^{-d}\bigg)\bigg).
$$
\end{enumerate}
\end{conjecture}

\medskip

For example, the conjecture claims that
\beq\label{eq:I and G for d2 and d3}
I^{(3)}=3!\oG^{\lim}_2-\frac{1}{2}(\oG^{\lim}_0)^2,\qquad I^{(4)}=4!\oG_3^{\lim}- 2 \oG_0^{\lim} \oG_1^{\lim}.
\eeq
In our paper, we check the first part of the conjecture for $d\le 3$ and prove these two equations. We also prove the conjecture when $i\hbar=1$.

\medskip

\section{Proof of the conjecture in the case $\alpha=i\hbar=1$}

Throughout this section, we assume that $i\hbar=\alpha=1$. Then, since $\Lambda(\mu)\Lambda(-\mu)=1$, we have 
\beqn
\oG'_d:=\hspace{-0.3cm}\sum_{\substack{n\ge 2\\a_1,\ldots,a_n\in\mbZ\\\sum a_i=0}}\sum_{g\ge 0}\frac{1}{n!}\left(\int_{\oM_{g,n+1}}\DR_g\left(0,a_1,\ldots,a_n\right) \psi_1^d\right):p_{a_1}\cdots p_{a_n}:,\quad d\ge 0.
\eeqn
We see that the operators $\oG_d'$ don't depend on $\mu$, so we don't have to consider the procedure of taking a limit.
Let us first relate the operators $\oG'_k$ and $I^{(l)}$. For any $d\ge 0$, these operators preserve the subspace $\Pol_d\subset\mbC[[q_1,q_2,\ldots]]$ of polynomials of degree $d$, where $\deg q_i:=i$. We will use the fact that the operators $\oG'_k$ and $I^{(l)}$ become diagonal in the basis in $\Pol_d$ given by the Schur symmetric functions~$s_\lambda$, $\lambda\in\mcP_d$:
$$
s_\lambda:=\det(h_{\lambda_j-j+i})_{1\le i,j\le l(\lambda)},\qquad \sum_{j\ge 0} h_j z^j:=\exp\bigg(\sum_{i\ge 1}\frac{q_i}{i}z^i\bigg),\qquad h_j:=0\text{ for $j<0$}.
$$

\medskip

On one hand, we have (see e.g.~\cite[eq.~(1.14)]{vIR24})
$$
\oG'_{k}s_\lambda=\underbrace{\frac{1}{(k+1)!}\sum_{i\ge 1}\left[\lb\lambda_i-i+\frac{1}{2}\rb^{k+1}-\lb-i+\frac{1}{2}\rb^{k+1}\right]}_{E_k(\lambda):=}s_\lambda.
$$
On the other hand, we have (see \cite[Section 6]{NS13})
$$
\lb\sum_{k\ge 1}I^{(k)}u^{-k}\rb s_\lambda=\underbrace{\left[u-(u+l(\lambda))\prod_{i=1}^{l(\lambda)}\frac{u+i-1-\lambda_i}{u+i-\lambda_i}\right]}_{\sum_{k\ge 1}F_k(\lambda)u^{-k}:=}s_\lambda.
$$
We compute (we denote $l=l(\lambda)$)
\begin{multline}\label{eq:tmp}
\log\left[(1+u^{-1}l)\prod_{i=1}^l\frac{1-u^{-1}(\lambda_i-i+1)}{1-u^{-1}(\lambda_i-i)}\right]=\\
=\sum_{n\ge 1}\frac{u^{-n}}{n}\lb(-1)^{n-1}l^n+\sum_{i=1}^l\lb(\lambda_i-i)^n-(\lambda_i-i+1)^n\rb\rb
\end{multline}
and proceed as follows:
\begin{align*}
&\sum_{i=1}^l\lb\lambda_i-i+\frac{1}{2}\pm\frac{1}{2}\rb^n=\sum_{i=1}^l\sum_{m=0}^n\lb\lambda_i-i+\frac{1}{2}\rb^{n-m}\lb\pm\frac{1}{2}\rb^m{n\choose m}=\\
&\hspace{1.5cm}=\sum_{m=0}^n(n-m)!E_{n-m-1}(\lambda)\lb\pm\frac{1}{2}\rb^m{n\choose m}+\sum_{i=1}^l\sum_{m=0}^n\lb-i+\frac{1}{2}\rb^{n-m}\lb\pm\frac{1}{2}\rb^m{n\choose m}=\\
&\hspace{1.5cm}=\sum_{m=0}^n(n-m)!E_{n-m-1}(\lambda)\lb\pm\frac{1}{2}\rb^m{n\choose m}+\sum_{i=1}^l\lb-i+\frac{1}{2}\pm\frac{1}{2}\rb^n,
\end{align*}
where $E_{-1}(\lambda):=0$. Finally, for~\eqref{eq:tmp} we obtain
\beqn
-\sum_{\substack{n\ge 1,\,m\ge 0\\n\ge 2m+1}}\frac{u^{-n}}{n}E_{n-2m-2}(\lambda)\frac{(n-2m-1)!}{2^{2m}}{n\choose 2m+1}=-\sum_{\substack{n\ge 1,\,m\ge 0\\n\ge 2m+1}}u^{-n}\frac{(n-1)!}{(2m+1)!2^{2m}}E_{n-2m-2}(\lambda)
\eeqn
and conclude that
\beqn 
\sum_{k\ge 1}I^{(k)}u^{-k} = u\bigg(1-\exp\bigg(-\sum_{\substack{n\ge 1,\,m\ge 0\\n\ge 2m+1}}u^{-n}\frac{(n-1)!}{(2m+1)!2^{2m}}G'_{n-2m-2}\bigg)\bigg). 
\eeqn

\medskip

It remains to check that
$$
\oG^{\bullet}_{d-2}=\sum_{m=0}^{\lfloor\frac{d}{2}\rfloor-1}\frac{1}{(2m+1)!2^{2m}}G'_{d-2m-2}.
$$
Indeed, we have
$$
A=1+\sum_{j,k\ge 1}(-1)^j f_{j,k}(\mu)z^{j+k}=1+\sum_{d\ge 2,\,g\ge 1}z^d\mu^{2g-d}\sum_{\substack{j,k\ge 1\\j+k=d}}(-1)^j\int_{\oM_{g,1}}\lambda_g\lambda_{g-j}\lambda_{g-k}\psi_1^{d-2},
$$
where $\lambda_i:=0$ for $i<0$. Since 
$$
\sum_{\substack{j,k\ge 1\\j+k=d}}(-1)^j\lambda_g\lambda_{g-j}\lambda_{g-k}=
\begin{cases}
(-1)^g\lambda_g,&\text{if $d=2g$},\\
0,&\text{otherwise},
\end{cases}  
$$
we obtain
$$
A=1+\sum_{g\ge 1}z^{2g}(-1)^g\int_{\oM_{g,1}}\lambda_g\psi_1^{2g-2}=1+\sum_{g\ge 1}\frac{1-2^{2g-1}}{2^{2g-1}}\frac{B_{2g}}{(2g)!}z^{2g}=\frac{z}{e^{z/2}-e^{-z/2}},
$$
where in the middle equation we used~\cite[Theorem~1]{FP03}. Therefore, $B=\frac{e^{z/2}-e^{-z/2}}{z}=\sum_{m\ge 0}\frac{z^{2m}}{2^{2m}(2m+1)!}$, which completes the proof of the conjecture in the case $i\hbar=1$.

\medskip


\section{Leading terms of the operators $\oG_d'$}

The goal of this section is to give an explicit formula for $\oG_d^\bullet$ up to terms from $\mcW^{(\le d)}_{\hbar,\mu}$ for an arbitrary~$d$, and up to terms from $\mcW^{(\le d-1)}_{\hbar,\mu}$ for $d=2$ and $d=3$. First, we need the following technical lemma.

\medskip

\begin{lemma}\label{lemma:commutator lemma}
Consider operators $\oG$ and $\oH$ of the form
\begin{align*}
&\oG=\sum_{a_1+a_2+a_3=0}\frac{:p_{a_1}p_{a_2}p_{a_3}:}{6}+\sum_{j=0}^M(i\hbar)^j\sum_{a\ge 1}\sum_{k\ge 0}\mu^k S_{j,k}^{(1)}(a)q_a p_a,\\
&\oH=\sum_{a_1+\ldots+a_d=0}\frac{:p_{a_1}\cdots p_{a_{d}}:}{d!}+\sum_{l=L}^{d-1}\sum_{a_1+\ldots+a_l=0}\sum_{j=0}^N(i\hbar)^j\sum_{k\ge 0}\mu^k S_{l,j,k}^{(2)}(a_1,\ldots,a_l):p_{a_1}\cdots p_{a_l}:,
\end{align*}
where $d\ge 3$, $M,N\ge 0$, $2\le L\le d-1$, $S_{j,k}^{(1)}\colon\mbZ_{\ge 1}\to\mbC$ is an arbitrary function, $S_{l,j,k}^{(2)}(a_1,\ldots,a_l)\in\mbC[a_1,\ldots,a_l]$, and the polynomials $S_{l,j,k}^{(2)}(a_1,\ldots,a_l)$ don't contain constant terms. Suppose that $[\oG,\oH]\in\mcW^{(\le L)}_{\hbar,\mu}$. Then the operator $\oH$ is uniquely determined by~$\oG$.
\end{lemma}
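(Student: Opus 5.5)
Uniqueness is proved by linearity. Suppose $\oH$ and $\oH'$ both have the stated form and both satisfy $[\oG,\oH],[\oG,\oH']\in\mcW^{(\le L)}_{\hbar,\mu}$. Their leading terms $\sum\frac{:p_{a_1}\cdots p_{a_d}:}{d!}$ cancel, so the difference is
$$
D:=\oH-\oH'=\sum_{l=L}^{d-1}\sum_{a_1+\ldots+a_l=0}T_l(a_1,\ldots,a_l):p_{a_1}\cdots p_{a_l}:,
$$
where each $T_l$ is a symmetric polynomial in $a_1,\ldots,a_l$ without constant term, with coefficients in $\mbC[[\hbar,\mu]]$ (polynomial in $i\hbar$). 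We also have $[\oG,D]\in\mcW^{(\le L)}_{\hbar,\mu}$. The goal is to show $D=0$.

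The induction runs downward on the top degree. Let $l_0$ be the largest $l$ with $T_l\ne0$, and suppose $l_0\ge L$. The commutator $[\oG,D]$ is computed using $[p_a,p_b]=i\hbar a\delta_{a+b,0}$ and the normal ordering.

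\begin{itemize}
\item A single contraction of the cubic part $\oG_{cub}$ of $\oG$ with the $l_0$-part of $D$ gives a term in $\mcW^{(l_0+1)}$. Its coefficient is, up to the factor $i\hbar$, the classical Poisson bracket $\{\int u^3/6,\ \int T_{l_0}\}$.
\item Double contractions, and commutators with the quadratic part $\sum S q_ap_a$ (which preserve degree), land in degree $\le l_0$.
\item Normal-ordering corrections only lower the degree.
\end{itemize}

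Since $l_0+1>L$, the component of $[\oG,D]$ in $\mcW^{(l_0+1)}$ must vanish. This component is $i\hbar$ times the operator whose symbol is
$$
\sum_{b_1+\ldots+b_{l_0+1}=0}\;\sum_{i}\, b_i\, T_{l_0}(b_1,\ldots,b_{i-1},\,b_i+b_{i+1},\,\ldots)\,:p_{b_1}\cdots p_{b_{l_0+1}}:,
$$
suitably symmetrized. This is precisely the Fourier form of $\{\int \frac{u^3}{6},\int T_{l_0}(u)\}=\int T_{l_0}\text{-variational derivative}\cdot u u_x$.

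The key step is the classical statement that a differential-polynomial local functional $\bar h=\int h\,dx$ of degree $l_0\ge2$ in $u$ commuting with $\int u^3$ under $\{\cdot,\cdot\}=\partial_x$ is a constant multiple of $\int u^{l_0}$. Equivalently, the centralizer of the dispersionless Hopf Hamiltonian in homogeneous $u$-degree $l_0$ consists only of $\int u^{l_0}$. I would prove this directly on the polynomial level. Write $T_{l_0}$ in terms of its top total $a$-degree homogeneous component $T^{top}$, of degree $r\ge1$ (nonzero because there is no constant term). The vanishing condition gives the functional equation
$$
\operatorname{Sym}\sum_i b_i\,T^{top}(\ldots,b_i+b_{i+1},\ldots)=0 \quad\text{on } \textstyle\sum b=0.
$$
Because this expression is homogeneous of degree $r+1$, lower-degree components do not interfere. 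I would show this forces $T^{top}$ to be constant modulo the relation $\sum a_i=0$. One option is to specialize $b_{l_0+1}=-(b_1+\cdots+b_{l_0})$ and compare coefficients. The alternative is the standard argument: $\{\int u^3,\bar h\}=0$ implies $\bar h$ is a conserved density of the Hopf equation $u_t=uu_x$, and those with derivatives are absent, since conservation laws of Hopf are $\int f(u)dx$ only, as one sees by looking at the highest derivative $u_m$ appearing and the resulting term $u_{m+1}$-linear obstruction.

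Given the key step, $T^{top}$ would have to be constant, contradicting $r\ge1$. Hence $T_{l_0}\equiv0$ on $\sum a_i=0$, the induction proceeds downward, and we get $D=0$. The delicate points are the exact bookkeeping that no other contributions reach $\mcW^{(l_0+1)}$, and the proof of the triviality of the Hopf centralizer, which I expect to be the main obstacle. The hypothesis $L\ge2$ is used so that all degrees considered are at least $2$, where the conservation-law argument applies.
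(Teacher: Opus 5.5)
Your proposal is correct and is essentially the paper's proof. You reduce by linearity to a difference with no degree-$d$ part and take its top degree $l_0$; the paper additionally extracts the lowest $(i\hbar)^{j_0}\mu^{k_0}$-coefficient, which is equivalent to your coefficientwise argument over $\mathbb{C}[[\hbar,\mu]]$. You then note that the $\mcW^{(l_0+1)}$-component of the commutator is $\hbar$ times the Poisson bracket with $\int\frac{u^3}{6}dx$, and conclude from the triviality of the centralizer of $\int\frac{u^3}{6}dx$ on functionals $\int q\,dx$ with $q|_{u_{\ge 1}=0}=0$.

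The paper does not reprove this last fact but cites it as \cite[Lemma~2.4]{Bur15b}, and you should do the same. Your ``highest derivative appears linearly'' heuristic does not by itself rule out total derivatives; it would need the usual normalization or variational-derivative argument.
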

\begin{proof}
By linearity, we have to check that if an operator $\oH'$ of the form
$$
\oH'=\sum_{l=L}^{d-1}\sum_{a_1+\ldots+a_l=0}\sum_{j=0}^N(i\hbar)^j\sum_{k\ge 0}\mu^k S_{l,j,k}^{(3)}(a_1,\ldots,a_l):p_{a_1}\cdots p_{a_l}:,
$$ 
where $S_{l,j,k}^{(3)}(a_1,\ldots,a_l)$ are polynomials without constant terms, satisfies the relation $[\oG,\oH']\in\mcW^{(\le L)}_{\hbar,\mu}$, then $\oH'$ must be zero. Suppose the converse, that $\oH'$ is nonzero. Let $l_0$ be the maximal~$l$ such that the operator 
$$
\sum_{a_1+\ldots+a_l=0}\sum_{j=0}^N(i\hbar)^j\sum_{k\ge 0}\mu^k S_{l,j,k}^{(3)}(a_1,\ldots,a_l):p_{a_1}\cdots p_{a_l}:
$$
is nonzero. Then let $j_0$ be the minimal $j$ such that the operator
$$
\sum_{a_1+\ldots+a_{l_0}=0}\sum_{k\ge 0}\mu^k S_{l_0,j,k}^{(3)}(a_1,\ldots,a_{l_0}):p_{a_1}\cdots p_{a_{l_0}}:
$$
is nonzero. Finally, let $k_0$ be the minimal $k$ such that the operator
$$
\sum_{a_1+\ldots+a_{l_0}=0}S_{l_0,j_0,k}^{(3)}(a_1,\ldots,a_{l_0}):p_{a_1}\cdots p_{a_{l_0}}:
$$
is nonzero. Denote 
\beqn
\oR:=\sum_{a_1+\ldots+a_{l_0}=0}S_{l_0,j_0,k_0}^{(3)}(a_1,\ldots,a_{l_0}):p_{a_1}\cdots p_{a_{l_0}}:,\qquad \oG':=\sum_{a_1+a_2+a_3=0}\frac{:p_{a_1}p_{a_2}p_{a_3}:}{6}.
\eeqn

\medskip

Then we have $\{\oG',\oR\}=0$, where we consider~$\oG'$ and~$\oR$ as formal power series in the variables $p_a$ and $\{\cdot,\cdot\}$ is the standard Poisson bracket given by $\{p_a,p_b\}=ia\delta_{a+b,0}$, i.e., $\{\oG',\oR\}=\sum_{a\in\mbZ_{\ne 0}}ia\frac{\d\oG'}{\d p_a}\frac{\d\oR}{\d p_{-a}}$. We have $\oG'=\int\frac{u^3}{6}dx$ and $\oR=\int q(u,u_x,\ldots)dx$, where $q$ is polynomial in $u,u_x,\ldots$ satisfying $q|_{u_{\ge 1}=0}$. Because of this property of $q$, the relation $\left\{\int\frac{u^3}{6}dx,\int q(u,u_x,\ldots)dx\right\}=0$ implies that $\int q(u,u_x,\ldots)dx=0$ (see, e.g., \cite[Lemma~2.4]{Bur15b} and also~\cite[Section~2.3]{Bur15a} for more details about the Poisson bracket~$\{\cdot,\cdot\}$). This contradiction proves the lemma. 
\end{proof}

\medskip

Let us introduce the following notations:
\begin{align*}
&\phi(x):=\sum_{n\ge 1}\frac{B_{2n}}{(2n)!}x^{2n}=\frac{x}{2}\frac{e^x+1}{e^x-1}-1,\\
&\theta(x):=\sum_{n\ge 2}(n+1)\frac{B_{2n}}{(2n)!}x^{2n}=\frac{\phi(x)-\phi(x)^2}{2}-\frac{x^2}{24}.
\end{align*}
We will denote $[n]:=\{1,\ldots,n\}$.

\medskip

\begin{proposition}\label{proposition:leading terms of G_d}{\ }
\begin{enumerate}
\item For an arbitrary $d\ge 1$, we have
\beqn
\oG_d^\bullet=\sum_{\substack{\oa\in\mbZ^{d+2}\\\sum a_i=0}}\frac{:p_{a_1}\cdots p_{a_{d+2}}:}{(d+2)!}+\frac{i\hbar-1}{\mu}\sum_{\substack{\oa\in\mbZ^{d+1}\\\sum a_i=0}}f_d(\oa,\mu)\frac{:p_{a_1}\cdots p_{a_{d+1}}:}{(d+1)!}+R_d,
\eeqn
where $R_d\in\mcW^{(\le d)}_{\hbar,\mu}$ and
\beqn
f_d(\oa,\mu)=\frac{1}{2}\sum_{I\subset[d+1]}\frac{(|I|-1)!(d-|I|)!}{d!}\phi(a_I\mu),\qquad a_I:=\sum_{i\in I}a_i.
\eeqn

\smallskip

\item For $d=2,3$, we have
\beqn
\pi_d(\oG_d^\bullet)=\sum_{\substack{\oa\in\mbZ^{d}\\\sum a_i=0}}\left(i\hbar\frac{\sum a_i^2}{24}-\frac{(i\hbar-1)^2}{\mu^2}h_d(\oa,\mu)\right)\frac{:p_{a_1}\cdots p_{a_{d}}:}{d!},
\eeqn
where
\begin{align*}
&h_2(a,-a,\mu)=\theta(a\mu)=\frac{\phi(a\mu)-\phi(a\mu)^2}{2}-\frac{(a\mu)^2}{24},\\
&h_3(a_1,a_2,a_3,\mu)=\sum_{i=1}^3\lb\frac{\phi(a_i\mu)}{4}-\frac{\phi(a_i\mu)^2}{6}-\frac{(a_i\mu)^2}{48}\rb - \frac{1}{6}\sum_{1\le i<j\le 3}\phi(a_i\mu)\phi(a_j\mu).
\end{align*}
\end{enumerate}
\end{proposition}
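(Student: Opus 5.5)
The plan is to combine Lemma~\ref{lemma:commutator lemma} with the commutativity $[\oG_1',\oG_d']=0$, which continues to hold after the substitution $\eps=\sqrt\mu$ and after the triangular recombination defining $\oG_d^\bullet$. Take $\oG:=\oG_1'=\oG_1^\bullet$. It has exactly the shape required in Lemma~\ref{lemma:commutator lemma}: a cubic part $\frac16\sum:p_{a_1}p_{a_2}p_{a_3}:$ plus a diagonal quadratic part $(1-i\hbar)\sum_a\big(\mu^{-1}-\frac{a}{2}\frac{e^{a\mu}+1}{e^{a\mu}-1}\big)q_ap_a$. In the notation of the statement, this quadratic part equals $\frac{i\hbar-1}{\mu}\sum_a\phi(a\mu)\,q_ap_a$.

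Take $\oH:=\oG_d^\bullet$. We know $\pi_{d+2}(\oG_d^\bullet)=\frac{1}{(d+2)!}\sum:p_{a_1}\cdots p_{a_{d+2}}:$. Every component of lower length $l$ with coefficients in $\mu^{\ge1}$ or $\hbar^{\ge1}$ comes from $Q_{d,g,j,k}$ with $\sum d_i\ge1$, hence is a polynomial in the $a_i$ without constant term. The only exceptions are the constant $\int u^{l}$ terms, and these are precisely what the normalization by $B=1/A$ removes. This last point is the reason for passing from $\oG_d'$ to $\oG_d^\bullet$, and I would verify it first from the computation of $\oG_d^\circ$: the $u_{\ge1}=0$ part of $\sum_i b_i\oG'_{d-i}$ is governed by $BA=1$. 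Consequently Lemma~\ref{lemma:commutator lemma} applies with $L=d+1$ for part~(1), and with $L=d$ for part~(2) once the degree-$(d+1)$ term is fixed.

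The uniqueness reduces each claim to exhibiting candidate terms. For part~(1), I would work with the candidate
\[
\oH_{\mathrm{cand}}=\frac{1}{(d+2)!}\sum:p_{a_1}\cdots p_{a_{d+2}}:\;+\;\frac{i\hbar-1}{\mu}\sum f_d(\oa,\mu)\frac{:p_{a_1}\cdots p_{a_{d+1}}:}{(d+1)!}
\]
and check that $\pi_{d+2}\big([\oG_1',\oH_{\mathrm{cand}}]\big)=0$. The commutator has three kinds of contributions in $\mcW^{(d+2)}$:
\begin{itemize}
\item[(a)] the quadratic part of $\oG_1'$ against the top term;
\item[(b)] the cubic part of $\oG_1'$ against the $f_d$ term, taken with a single contraction;
\item[(c)] the quantum (double-contraction) corrections from normal ordering.
\end{itemize}
The contributions (c) contain an extra factor $i\hbar$ together with an extra $p$-degree deficit, so they do not enter $\pi_{d+2}$ at leading order. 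I still need to track that they land in degree $\le d+1$.

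Contribution (a) produces $\sum_i a_i\phi(a_i\mu)$ times the symmetric monomial. Contribution (b) produces a symmetrization of $a_{I}$-type sums of $f_d$. The identity to prove is that the weights $\frac{(|I|-1)!(d-|I|)!}{d!}$ make these cancel. This is a combinatorial identity, which I would check by comparing coefficients of $\phi(a_I\mu)$ for each subset $I$: the splitting of $p_{a_I}$ into pieces corresponds to adding or removing one element of $I$. Since $\phi$ is even and $\phi(0)=0$, $f_d$ has no constant term, so the lemma's hypothesis holds. Because $\pi_{d+2}$ of a commutator in $\mcW$ is given by the Poisson bracket, this step is essentially classical: I only need $\{\oG_1^{\mathrm{cl}},\cdot\}$ at top order.

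For part~(2), with $d=2,3$, I would repeat the argument one level lower. Write a candidate $\pi_d$ term with undetermined polynomial coefficients and impose $\pi_{d+1}\big([\oG_1',\oG_d^\bullet]\big)=0$. This now includes:
\begin{itemize}
\item the single-contraction bracket of the $\phi$-quadratic part with the $f_d$ term, which generates the $\phi\phi$ products in $h_d$;
\item the bracket of the cubic part with the unknown $\pi_d$ term;
\item the double-contraction quantum terms of the cubic part against the top term, which generate the $i\hbar\sum a_i^2/24$ piece and the $(a\mu)^2$ terms.
\end{itemize}
The $b_2=\frac{i\hbar}{24}$ shift also contributes, through $\frac{i\hbar}{24}\oG'_{d-2}$. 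Since the answer is given, it suffices to verify that the stated $h_2$ and $h_3$ solve these equations; uniqueness then follows from the lemma with $L=d$. I expect the main obstacle to be this bookkeeping of quantum normal-ordering corrections in part~(2), together with confirming that the constant-term-free hypothesis survives the normalization. I would cross-check the result against the $i\hbar=1$ case, where the $(i\hbar-1)$ terms vanish, and against $\oG_2$ computed directly from $\lambda_g\lambda_{g-1}$ integrals.
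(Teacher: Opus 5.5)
Your plan is the paper's proof. It uses uniqueness from Lemma~\ref{lemma:commutator lemma} with $\oG=\oG_1'$, applied to $\pi_{\ge d+1}$ and $\pi_{\ge d}$ of $\oG_d^\bullet$, where the normalization is exactly what removes the constant terms; Part~1 then reduces to the top-order bracket identity checked subset by subset (using $\phi(-x)=\phi(x)$, $\phi(0)=0$ and $a_{[d+2]\setminus I}=-a_I$), and Part~2 to the next-order identity, which the paper proves via \eqref{eq:triple-phi identity} (summed over the six triples $a_i,a_j,a_k+a_l$ when $d=3$). One slip: the $(a\mu)^2$ terms in $h_d$ are forced by that classical $\phi$-identity, not by double contractions, which give only a $\mu$-independent $(i\hbar)^2$ contribution cancelled by the bracket of the cubic term with the $i\hbar\sum a_i^2/24$ term --- a step the paper leaves implicit and you rightly include.
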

\begin{proof}
\emph{1}. By Lemma~\ref{lemma:commutator lemma}, it is sufficient to check that
$$
\left[\oG_1',\sum_{a_1+\ldots+a_{d+2}=0}\frac{:p_{a_1}\cdots p_{a_{d+2}}:}{(d+2)!}+\frac{i\hbar-1}{\mu}\sum_{a_1+\ldots+a_{d+1}=0}f_d(\oa,\mu)\frac{:p_{a_1}\cdots p_{a_{d+1}}:}{(d+1)!}\right]\in\mcW_{\hbar,\mu}^{(\le d+1)}.
$$
Equivalently, we have to check that
\begin{multline*}
\left\{\sum_{b_1+b_2+b_3=0}\frac{:p_{b_1}p_{b_2}p_{b_3}:}{6},\sum_{a_1+\ldots+a_{d+1}=0}f_d(\oa,\mu)\frac{:p_{a_1}\cdots p_{a_{d+1}}:}{(d+1)!}\right\}+\\
+\left\{\sum_{a\ge 1}\phi(a\mu)p_{-a}p_{a},\sum_{a_1+\ldots+a_{d+2}=0}\frac{:p_{a_1}\cdots p_{a_{d+2}}:}{(d+2)!}\right\}=0.
\end{multline*}
The left-hand side is equal to
\begin{align*}
&-i\hbar\sum_{a_1+\ldots+a_{d+2}=0}\frac{:p_{a_1}\cdots p_{a_{d+2}}:}{(d+2)!}\sum_{1\le i<j\le d+2}(a_i+a_j)f_d\lb a_1,\ldots,\widehat{a_i},\ldots,\widehat{a_j},\ldots,a_{d+2},a_i+a_j,\mu\rb\\
&-i\hbar\sum_{a_1+\ldots+a_{d+2}=0}\frac{:p_{a_1}\cdots p_{a_{d+2}}:}{(d+2)!}\sum_{i=1}^{d+2}a_i\phi(a_i\mu).
\end{align*}
So we have to check that
$$
\sum_{1\le i<j\le d+2}(a_i+a_j)f_d\lb a_1,\ldots,\widehat{a_i},\ldots,\widehat{a_j},\ldots,a_{d+2},a_i+a_j,\mu\rb+\sum_{i=1}^{d+2}a_i\phi(a_i\mu)=0.
$$
Indeed, the left-hand side is equal to
\begin{align*}
&\frac{1}{2}\sum_{\substack{I\subset[d+2]\\i<j\\i,j\notin I}}(a_i+a_j)\left(\frac{(|I|-1)!(d-|I|)!}{d!}\phi(a_I\mu)+\frac{|I|!(d-1-|I|)!}{d!}\phi(a_{I\cup\{i,j\}}\mu)\right)+\sum_{i=1}^{d+2}a_i\phi(a_i\mu)=\\
=&\sum_{\substack{I\subset[d+2]\\2\le|I|\le d}}\frac{(d+1-|I|)a_{[d+2]\backslash I}}{2}\frac{(|I|-1)!(d-|I|)!}{d!}\phi(a_I\mu)+\sum_{i=1}^{d+2}\frac{d\cdot a_{[d+2]\backslash\{i\}}}{2}\frac{(d-1)!}{d!}\phi(a_i\mu)\\
&+\sum_{\substack{I\subset[d+2]\\2\le|I|\le d}}\frac{(|I|-1)a_I}{2}\frac{(|I|-2)!(d+1-|I|)!}{d!}\phi(a_I\mu)+\sum_{i=1}^{d+2}\frac{d\cdot a_{[d+2]\backslash\{i\}}}{2}\frac{(d-1)!}{d!}\phi(a_{[d+2]\backslash\{i\}}\mu)\\
&+\sum_{i=1}^{d+2}a_i\phi(a_i\mu)=0,
\end{align*}
which completes the proof of the first part of the proposition.

\medskip

\emph{2}. By Lemma~\ref{lemma:commutator lemma}, it is sufficient to check that for $d=2,3$ the following identity holds:
\begin{multline*}
\left\{\sum_{b_1+b_2+b_3=0}\frac{:p_{b_1}p_{b_2}p_{b_3}:}{6},-\sum_{a_1+\ldots+a_{d}=0}h_d(\oa,\mu)\frac{:p_{a_1}\cdots p_{a_{d}}:}{d!}\right\}+\\
+\left\{\sum_{a\ge 1}\phi(a\mu)p_{-a}p_{a},\sum_{a_1+\ldots+a_{d+1}=0}f_d(\oa,\mu)\frac{:p_{a_1}\cdots p_{a_{d+1}}:}{(d+1)!}\right\}=0,
\end{multline*}
or equivalently
$$
-\sum_{1\le i<j\le d+1}(a_i+a_j)h_d\lb a_1,\ldots,\widehat{a_i},\ldots,\widehat{a_j},\ldots,a_{d+1},a_i+a_j,\mu\rb+\lb\sum_{i=1}^{d+1}a_i\phi(a_i\mu)\rb f_d(\oa,\mu)=0.
$$

\medskip

\underline{The case $d=2$}. We have to check that
$$
\sum_{i=1}^3 a_i\theta(a_i\mu)+\frac{1}{2}\lb\sum_{i=1}^3 a_i\phi(a_i\mu)\rb\lb\sum_{i=1}^3\phi(a_i\mu)\rb=0.
$$
Since $\theta(x)=\frac{\phi(x)-\phi(x)^2}{2}-\frac{x^2}{24}$, this is equivalent to
\beq\label{eq:triple-phi identity}
\sum_{i=1}^3 a_i\phi(a_i)+\sum_{1\le i<j\le 3}(a_i+a_j)\phi(a_i)\phi(a_j)=\sum_{i=1}^3\frac{a_i^3}{12}.
\eeq
This relation is equivalent to the equality
$$
\sum_{1\le i<j\le 3}(a_i+a_j)B(a_i)B(a_j)=\sum_{i=1}^3\frac{a_i^3}{12},\quad B(x):=\phi(x)+1=\frac{x}{2}\frac{e^x+1}{e^x-1},
$$
which can be checked by a direct computation.

\medskip

\underline{The case $d=3$}. We have to check that
$$
\underbrace{-\sum_{1\le i<j\le 4}(a_i+a_j)h_3\lb a_1,\ldots,\widehat{a_i},\ldots,\widehat{a_j},\ldots,a_4,a_i+a_j,\mu\rb}_{S_1:=}+\underbrace{\lb\sum_{i=1}^4a_i\phi(a_i\mu)\rb f_3(\oa,\mu)}_{S_2:=}=0.
$$
Without loss of generality, we can assume that $\mu=1$. We have
\begin{align*}
&S_1=\sum_{i=1}^4a_i\lb\frac{\phi(a_i)}{2}-\frac{\phi(a_i)^2}{3}-\frac{a_i^2}{24}\rb - \frac{1}{6}\sum_{1\le i<j\le 4}(a_i+a_j)\big(\phi(a_i)\phi(a_j)+(\phi(a_i)+\phi(a_j))\phi(a_i+a_j)\big),\\
&S_2=\lb\sum_{i=1}^4a_i\phi(a_i)\rb\lb\sum_{i=1}^4\frac{\phi(a_i)}{3}+\sum_{1\le i<j\le 4}\frac{\phi(a_i+a_j)}{12}\rb,
\end{align*}
and therefore
\beq\label{eq:formula for S1+S2}
S_1+S_2=\sum_{i=1}^4\lb\frac{a_i\phi(a_i)}{2}-\frac{a_i^3}{24}\rb +\frac{1}{6}\sum_{1\le i<j\le 4}\big((a_i+a_j)\phi(a_i)\phi(a_j)-(a_i\phi(a_j)+a_j\phi(a_i))\phi(a_i+a_j)\big).
\eeq
For a pair of integers $1\le i<j\le 4$, let us write relation~\eqref{eq:triple-phi identity} for the variables $a_i,a_j,a_k+a_l$, where $\{i,j,k,l\}=\{1,2,3,4\}$, and then take the sum of these relations over all pairs $1\le i<j\le 4$. One can easily check that the resulting equality exactly says that the right-hand side of~\eqref{eq:formula for S1+S2} is zero.
\end{proof}

\medskip

As a corollary, we get
\begin{align}
&\int_{\oM_{g,3}}\DR_g(0,a,-a)\psi_1^2\lambda_g\lambda_{g-2}=(g+1)\frac{|B_{2g}|}{(2g)!}a^{2g},&& g\ge 2,\label{eq:psi2-lambdag-lambdag2}\\
&\int_{\oM_{g,4}}\DR_g(0,a,b,c)\psi_1^2\lambda_g\lambda_{g-1}=\frac{|B_{2g}|}{2(2g)!}(a^{2g}+b^{2g}+c^{2g}),&& g\ge 1.\notag
\end{align}

\medskip

\section{Proof of the conjecture for $d\le 3$}

\subsection{The operator $\oG^\bullet_2$}

By Proposition~\ref{proposition:leading terms of G_d}, we have
\beqn
\oG_2^\bullet=\hspace{-0.2cm}\sum_{a+b+c+d=0}\hspace{-0.2cm}\frac{:p_ap_bp_cp_d:}{24}+\frac{i\hbar-1}{4\mu}\hspace{-0.1cm}\sum_{a+b+c=0}\hspace{-0.1cm}\phi(a\mu):p_ap_bp_c:+\sum_{a\ge 1}\left(\frac{a^2}{12}i\hbar-\frac{(i\hbar-1)^2}{\mu^2}\theta(a\mu)\right)q_ap_a.
\eeqn
Since 
$$
\lim_{\mu\in\mbR,\,\mu\to+\infty}\frac{\phi(a\mu)}{\mu}=\frac{|a|}{2},\qquad \lim_{\mu\in\mbR,\,\mu\to+\infty}\frac{\theta(a\mu)}{\mu^2}=-\frac{a^2}{6},\qquad a\in\mbZ\backslash\{0\},
$$
we obtain
$$
\oG_2^{\lim}=\sum_{a+b+c+d=0}\frac{:p_ap_bp_cp_d:}{24}+\frac{i\hbar-1}{8}\sum_{a+b+c=0}|a|:p_ap_bp_c:+\sum_{a\ge 1}\left(\frac{a^2}{12}i\hbar+\frac{(i\hbar-1)^2}{6}a^2\right)q_a p_{a}.
$$

\medskip

Let us now prove the first equation in~\eqref{eq:I and G for d2 and d3}. Using~\eqref{eq:formula for I3}, and since
\begin{align*}
\sum_{i,j,k\geq 1}q_{i}p_{i-j}p_{j-k}p_k =&\sum_{i,j,k\geq 1}q_{i}:p_{i-j}p_{j-k}:p_k + \sum_{i>j\ge 1}q_{i}[p_{i-j},p_{j-i}]p_i=\\
=&\sum_{i,j,k\geq 1}:p_{-i}p_{i-j}p_{j-k}p_k: + \alpha\sum_{i\ge 1} \frac{i^2-i}{2}q_i p_i=\\
=&\frac{1}{4}\sum_{a+b+c+d=0}:p_a p_b p_c p_d: -\frac{1}{2}\sum_{a,b\ge 1}p_{-a} p_{-b}p_a p_b + \alpha\sum_{i\ge 1}\frac{i^2-i}{2}q_i p_i,
\end{align*}
we have
\begin{align*}
I^{(3)}=&\frac{1}{4}\sum_{a+b+c+d=0}:p_a p_b p_c p_d: + \frac{3(\alpha-1)}{2}\sum_{a,b\ge 1} (a+b)(p_{-a}p_{-b}p_{a+b}+p_{-a-b}p_ap_b)\\
&-\frac{1}{2}\sum_{a,b\ge 1}p_{-a} p_{-b}p_a p_b +\sum_{i\ge 1}\lb\alpha\frac{i^2-i}{2}+(\alpha-1)^2 i^2\rb p_{-i} p_i.
\end{align*}
Therefore,
$$
I^{(3)}-6\oG_2^{\lim}=-\frac{1}{2}\sum_{a,b\ge 1}p_{-a} p_{-b}p_a p_b -\frac{\alpha}{2}\sum_{i\ge 1}i p_{-i} p_i=-\frac{\oG_0^2}{2},
$$
which completes the proof of the first equation in~\eqref{eq:I and G for d2 and d3}.

\medskip

\subsection{The operator $\oG^\bullet_3$}\label{subsection:G3 Hamiltonian}

By Proposition~\ref{proposition:leading terms of G_d}, we have
\begin{align*}
    \oG'_3= &\sum_{\sum_{i=1}^5 a_i=0}\frac{:\prod p_{a_i}:}{5!} + (i\hbar-1)\sum_{\sum_{i=1}^4a_i=0}\left( \frac{\sum\phi(a_i\mu)}{3\mu} + \frac{\sum_{i<j}\phi((a_i+a_j)\mu)}{12\mu}\right)\frac{:\prod p_{a_i}:}{4!}\\
    &+ \sum_{\sum_{i=1}^{3}a_i=0}i\hbar\frac{\sum a_i^2 -1}{24}\frac{:\prod p_{a_i}:}{3!}\\
    &-\sum_{\sum_{i=1}^{3}a_i=0}(i\hbar-1)^2\left(\frac{\sum\phi(a_i\mu)}{4\mu^2}-\frac{\sum\phi(a_i\mu)^2}{6\mu^2}-\frac{\sum a_i^2}{48} - \frac{\sum_{i<j}\phi(a_i\mu)\phi(a_j\mu)}{6\mu^2}\right)\frac{:\prod p_{a_i}:}{3!}\\
    & + \oG_{3,2}',
\end{align*}
where
\begin{align*}
\oG_{3,2}'=&\lb 1-(i\hbar)^3\rb\sum_{g\geq 3}(-1)^g\mu^{2g-3}\sum_{a\ge 1}\left(\int_{\oM_{g,3}}\DR_g (0,a,-a)\lambda_g\lambda_{g-3}\psi_1^3\right) q_a p_a\\
&+i\hbar(1-i\hbar)\sum_{g\geq 2}(-1)^{g-1}\mu^{2g-3}\sum_{a\ge 1}\left(\int_{\oM_{g,3}}\DR_g (0,a,-a)\lambda_{g-1}\lambda_{g-2}\psi_1^3\right)q_a p_a.
\end{align*}

\medskip

Using \cite[Theorem 4]{BSSZ15}, we compute
\begin{align*}
&\int_{\oM_{g,3}}\DR_g (-a-b,a,b)\lambda_g\lambda_{g-3}\psi_1^3=\frac{1}{2g+1}\int_{\oM_{g,2}}\DR_g(-a-b,a+b)\lambda_g\lambda_{g-3} +\\
&\hspace{3cm}+\sum_{\substack{g_1\ge 1,\,g_2\ge 2\\ g_1+g_2=g}}\frac{2g_1}{2g+1}\frac{aP_{g_1}(a)+bP_{g_1}(b)}{a+b}\left( \int_{\oM_{g_2,3}}\DR_{g_2}(-a-b,a,b)\lambda_{g_2}\lambda_{g_2-2}\psi_1^2 \right),
\end{align*}
where we denote $P_g(a):=\int_{\oM_{g,2}}\DR_g(a,-a)\lambda_g\lambda_{g-1}=a^{2g}\frac{|B_{2g}|}{(2g)(2g)!}$. Using formula~\eqref{eq:psi2-lambdag-lambdag2} and the identity $\left.\frac{a^{2g_1+1}+b^{2g_1+1}}{a+b}\right|_{b=-a}=(2g_1+1)a^{2g_1}$, we obtain
\begin{equation*}
\int_{\oM_{g,3}}\DR_g (0,a,-a)\lambda_g\lambda_{g-3}\psi_1^3=a^{2g}\sum_{\substack{g_1\ge 1,\,g_2\ge 2\\ g_1+g_2=g}}\frac{(2g_1+1)(g_2+1)}{2g+1}\frac{|B_{2g_1}||B_{2g_2}|}{(2g_1)!(2g_2)!}.
\end{equation*}

\medskip

We proceed with another integral, again using~\cite[Theorem 4]{BSSZ15}:
\begin{align}
&\int_{\oM_{g,3}}\DR_g (-a-b,a,b)\lambda_{g-1}\lambda_{g-2}\psi_1^3=\notag\\
&\hspace{1.5cm}=\frac{1}{2g+1}\left( \int_{\oM_{g,2}}\DR_g(-a-b,a+b)\lambda_{g-1}\lambda_{g-2}\psi_1^2 \right)\label{eq:psi3-integral1}\\
&\hspace{1.9cm}+\sum_{\substack{g_1\ge 1,\,g_2\ge 2\\g_1+g_2=g}}\frac{6g_1}{2g+1}\frac{a P_{g_1}(a)+b P_{g_1}(b)}{a+b}
\left(\int_{\oM_{g_2,3} }\DR_{g_2}(-a-b,a,b)\lambda_{g_2}\lambda_{g_2-2}\psi_1^2 \right) \label{eq:psi3-integral1b}\\
&\hspace{1.9cm}+\frac{2(g-1)}{2g+1}\frac{a P_{g-1}(a)+b P_{g-1}(b)}{a+b}
\left(\int_{\oM_{1,3} }\DR_{1}(-a-b,a,b)\psi_1^2 \right)\label{eq:psi3-integral2}\\
&\hspace{1.9cm}+\sum_{\substack{i,j\ge 1\\i+j=a}}\frac{ij}{2(2g+1)(a+b)}
\left(\int_{\oM_{g-1,4} }\DR_{g-1}(-a-b,i,j,b)\lambda_{g-1}\lambda_{g-2}\psi_1^2 \right)\label{eq:psi3-integral3}\\
&\hspace{1.9cm}+\sum_{\substack{i,j\ge 1\\i+j=b}}\frac{ij}{2(2g+1)(a+b)}
\left(\int_{\oM_{g-1,4} }\DR_{g-1}(-a-b,a,i,j)\lambda_{g-1}\lambda_{g-2}\psi_1^2 \right),\label{eq:psi3-integral4}
\end{align}
Substituting $b=-a$ in~\eqref{eq:psi3-integral1}, we obtain
\begin{multline*}
\frac{1}{2g+1}\left( \int_{\oM_{g,2}}\DR_g(0,0)\lambda_{g-1}\lambda_{g-2}\psi_1^2 \right)=
\frac{(-1)^g}{2g+1}\left( \int_{\oM_{g,2}}\psi_1^2\lambda_g\lambda_{g-1}\lambda_{g-2} \right)=\\ 
=\frac{(-1)^g}{2g+1}\left( \int_{\oM_{g,1}}\psi_1\lambda_g\lambda_{g-1}\lambda_{g-2} \right)= \frac{(-1)^g}{2g+1}\frac{|B_{2g-2}||B_{2g}|}{4g(2g-2)!}.
\end{multline*}
Substituting $b=-a$ in~\eqref{eq:psi3-integral1b}, we obtain
$$
a^{2g}\sum_{\substack{g_1\ge 1,\,g_2\ge 2\\g_1+g_2=g}}\frac{3(2g_1+1)(g_2+1)}{2g+1}\frac{|B_{2g_1}||B_{2g_2}|}{(2g_1)!(2g_2)!}.
$$
Substituting $b=-a$ in~\eqref{eq:psi3-integral2}, we obtain
$$
\frac{2g-1}{2g+1}\frac{|B_{2g-2}|}{(2g-2)!}\frac{2a^2-1}{24}a^{2g-2}.
$$

\medskip

Let us now compute the terms~\eqref{eq:psi3-integral3} and~\eqref{eq:psi3-integral4}. Using that
\begin{align*}
&\int_{\oM_{g,4}}\DR_g(-a_1-a_2-a_3,a_1,a_2,a_3)\psi_1^2\lambda_g\lambda_{g-1}=\\
=&\frac{|B_{2g}|}{(2g+2)(2g)!}\lb\frac{\lb\sum a_i\rb^{2g}}{g(2g+1)}+\frac{\sum a_i^{2g+1}}{\sum a_i}+\sum_{\substack{i<j\\\{i,j,k\}=\{1,2,3\}}}\frac{(a_i+a_j)^{2g+2}+(a_i+a_j)a_k^{2g+1}}{(2g+1)\lb\sum a_i\rb^2}\rb,
\end{align*}
for the term~\eqref{eq:psi3-integral3} we obtain
\begin{align*}
\frac{|B_{2g-2}|}{2(2g+1)!}\sum_{i=1}^{a-1}&\left[\frac{i(a-i)(a+b)^{2g-3}}{g-1}+\frac{i(a-i)(2g-1)}{(a+b)^2}\lb \underline{i^{2g-1}+(a-i)^{2g-1}} + b^{2g-1}\rb\right.\\    
&+ \frac{i(a-i)a(a^{2g-1}+b^{2g-1})}{(a+b)^3} + \underline{\underline{\frac{i(a-i)(b+i)((b+i)^{2g-1}+(a-i)^{2g-1})}{(a+b)^3}}}\\
    & + \left.\underline{\underline{\frac{i(a-i)(a+b-i) ((a+b-i)^{2g-1}+i^{2g-1})}{(a+b)^3}}}\right],
\end{align*}
and for the term~\eqref{eq:psi3-integral4}:
\begin{align*}
\frac{|B_{2g-2}|}{2(2g+1)!}\sum_{i=1}^{b-1}&\left[\frac{i(b-i)(a+b)^{2g-3}}{g-1}+\frac{i(b-i)(2g-1)}{(a+b)^2}\lb \underline{i^{2g-1}+(b-i)^{2g-1}} + a^{2g-1}\rb\right.\\    
&+ \frac{i(b-i)b(b^{2g-1}+a^{2g-1})}{(a+b)^3} + \underline{\underline{\frac{i(b-i)(a+i)((a+i)^{2g-1}+(b-i)^{2g-1})}{(a+b)^3}}}\\
    & + \left.\underline{\underline{\frac{i(b-i)(a+b-i) ((a+b-i)^{2g-1}+i^{2g-1})}{(a+b)^3}}}\right].
\end{align*}

\medskip

The underlined terms give
$$
\frac{2C(2g-1)}{(a+b)^2}(aS_{2g}(a)+bS_{2g}(b)-S_{2g+1}(a)-S_{2g+1}(b)),
$$
where $C:=\frac{|B_{2g-2}|}{2(2g+1)!}$ and we use the notation 
$$
S_m(n):=\sum_{i=0}^{n-1} i^m=\sum_{k=0}^m\frac{m!}{k!(m+1-k)!}B_k n^{m+1-k},\quad n,m\ge 0.
$$ 
The doubly underlined terms give
\begin{align*}
&\frac{2C}{(a+b)^3}\Bigg[\sum_{i=b}^{a+b-1}(i-b)(a+b-i)i^{2g}+\sum_{i=0}^{a-1}(a-i)(a+b-i)i^{2g}\\
&+\sum_{i=a}^{a+b-1}(i-a)(a+b-i)i^{2g}+\sum_{i=0}^{b-1}(b-i)(a+b-i)i^{2g}\Bigg]=\\
=&\frac{2C}{(a+b)^3}\Bigg[\sum_{i=0}^{a+b-1}(i-b)(a+b-i)i^{2g}-2\sum_{i=0}^{b-1}(i-b)(a+b-i)i^{2g}\\
&+\sum_{i=0}^{a+b-1}(i-a)(a+b-i)i^{2g}-2\sum_{i=0}^{a-1}(i-a)(a+b-i)i^{2g}\Bigg]=\\
=&\frac{2C}{(a+b)^3}\Bigg[-2S_{2g+2}(a+b)+3(a+b)S_{2g+1}(a+b)-(a+b)^2S_{2g}(a+b)\\
&+2(S_{2g+2}(b)+S_{2g+2}(a))-2((a+2b)S_{2g+1}(b)+(b+2a)S_{2g+1}(a))+2(a+b)(bS_{2g}(b)+aS_{2g}(a))\Bigg].
\end{align*}
In total, the sum of the terms~\eqref{eq:psi3-integral3} and~\eqref{eq:psi3-integral4} is equal to
\begin{align*}
&\underbrace{C\frac{(a^3+b^3-a-b)(a+b)^{2g-3}}{6(g-1)}}_{S_1:=}\\
&+\underbrace{C\Bigg[\frac{(2g-1)(a+b)(a^{2g-1}(b^3-b) + b^{2g-1}(a^3-a))+(a^{2g-1}+b^{2g-1})(a^4+b^4-a^2-b^2)}{6(a+b)^3}\Bigg]}_{S_2:=}\\
&+\frac{2C}{(a+b)^3}\Bigg[-2S_{2g+2}(a+b)+3(a+b)S_{2g+1}(a+b)-(a+b)^2S_{2g}(a+b)\\
&+2\big(S_{2g+2}(a)+S_{2g+2}(b)\big)-\big(((2g+3)a+(2g+1)b)S_{2g+1}(a)+((2g+3)b+(2g+1)a)S_{2g+1}(b)\big)\\
&+(2g+1)(a+b)\big(aS_{2g}(a)+bS_{2g}(b)\big)\Bigg].
\end{align*}

\medskip

Let us now compute the sum of the terms~\eqref{eq:psi3-integral3} and~\eqref{eq:psi3-integral4} after the substitution $b=-a$. We have $S_1|_{b=-a}=0$ and
$$
S_2|_{b=-a}=C\Bigg[\lb-\frac{2}{9}g^3+\frac{4}{3}g^2-\frac{11}{18}g\rb a^{2g}+\lb\frac{2}{9}g^3-\frac{2}{3}g^2+\frac{5}{18}g\rb a^{2g-2}\Bigg].
$$
The rest of the terms give
\begin{align*}
&\frac{2C}{(a+b)^3}B_{2g}\Bigg[-2(a+b)^3\frac{(2g+2)(2g+1)}{6}+3(a+b)(a+b)^2\frac{2g+1}{2}-(a+b)^3\Bigg]\\
&+\frac{2C}{(a+b)^3}\sum_{k=0}^{2g}\frac{B_k}{k!}\Bigg[2\frac{(2g+2)!}{(2g+3-k)!}(a^{2g+3-k}+b^{2g+3-k})\\
&\hspace{2.3cm}-\frac{(2g+1)!}{(2g+2-k)!}\lb((2g+3)a+(2g+1)b)a^{2g+2-k}+((2g+3)b+(2g+1)a)b^{2g+2-k}\rb\\
&\hspace{2.3cm}+(2g+1)(a+b)\frac{(2g)!}{(2g+1-k)!}(a^{2g+2-k}+b^{2g+2-k})\Bigg]\Bigg|_{b=-a}=\\
=&\frac{|B_{2g-2}|B_{2g}}{(2g+1)!}\frac{(1-2g)(4g-1)}{6}+\frac{|B_{2g-2}|}{6}\sum_{k=0}^{2g}\frac{B_k}{k!}\frac{1-k}{(2g-k)!}a^{2g-k}=\\
=&-\frac{|B_{2g-2}|B_{2g}}{2(2g+1)(2g-2)!}+\frac{|B_{2g-2}|}{6}\sum_{k=0}^{2g-2}\frac{B_k}{k!}\frac{1-k}{(2g-k)!}a^{2g-k}.
\end{align*}
So the sum of the terms~\eqref{eq:psi3-integral3} and~\eqref{eq:psi3-integral4} after the substitution $b=-a$ is equal to
\begin{align*}
&\frac{|B_{2g-2}|}{2(2g+1)!}\Bigg[\lb-\frac{2}{9}g^3+\frac{4}{3}g^2-\frac{11}{18}g\rb a^{2g}+\lb\frac{2}{9}g^3-\frac{2}{3}g^2+\frac{5}{18}g\rb a^{2g-2}\Bigg]\\
&-\frac{|B_{2g-2}|B_{2g}}{2(2g+1)(2g-2)!}+\frac{|B_{2g-2}|}{6}\sum_{k=0}^{2g-2}\frac{B_k}{k!}\frac{1-k}{(2g-k)!}a^{2g-k}.
\end{align*}

\medskip

We finally get
\begin{align*}
\int_{\oM_{g,3}}\DR_g (0,-a,a)&\lambda_{g-1}\lambda_{g-2}\psi_1^3=\\
=&\frac{(-1)^g|B_{2g-2}||B_{2g}|}{4g(2g-2)!} + a^{2g}\sum_{\substack{g_1\ge 1,\,g_2\ge 2\\ g_1+g_2=g}}\frac{3(2g_1+1)(g_2+1)}{(2g+1)}\frac{|B_{2g_1}||B_{2g_2}|}{(2g_1)!(2g_2)!}\\
&+\frac{a^{2g-2}(5a^2-2)}{72}\frac{|B_{2g-2}|}{(2g-2)!}+\frac{|B_{2g-2}|}{6}\sum_{k=0}^{2g-2}\frac{B_k}{k!}\frac{1-k}{(2g-k)!}a^{2g-k}
\end{align*}
and
\begin{align}
\oG_{3,2}'=&(1-i\hbar)^3\sum_{a>0}\sum_{g\geq 3}a^{2g}\mu^{2g-3}\left(\sum_{\substack{g_1\ge 1,\,g_2\ge 2\\g_1+g_2=g}}\frac{(2g_1+1)(g_2+1)}{2g+1}\frac{B_{2g_1}B_{2g_2}}{(2g_1)!(2g_2)!}\right) q_{a} p_a\notag\\
& +i\hbar(1-i\hbar)\sum_{a>0}\sum_{g\geq 2}\mu^{2g-3}\frac{B_{2g-2}B_{2g}}{4g(2g-2)!} q_{a} p_a\label{eq:divergent term}\\
& +i\hbar(1-i\hbar)\sum_{a>0}\sum_{g\geq 2}\mu^{2g-3}\frac{a^{2g-2}(2-5a^2)}{72}\frac{B_{2g-2}}{(2g-2)!} q_{a} p_a\notag\\
& +i\hbar(1-i\hbar)\sum_{a>0}\sum_{g\geq 2}\mu^{2g-3}\frac{B_{2g-2}}{6}\sum_{k=0}^{2g-2}\frac{B_k}{k!}\frac{k-1}{(2g-k)!}a^{2g-k}q_{a} p_a. \label{eq:the last term of quadratic terms}
\end{align}

\medskip

We compute
\begin{align*}
&\sum_{g\ge 3}\frac{x^{2g}}{2g+1}\sum_{\substack{g_1\ge 1,\,g_2\ge 2\\g_1+g_2=g}}(2g_1+1)(g_2+1)\frac{B_{2g_1}}{(2g_1)!}\frac{B_{2g_2}}{(2g_2)!}=\frac{1}{x}\int\frac{d}{dx}\lb x\phi(x)\rb\cdot\theta(x)dx=\\
&\hspace{2cm}=-\frac{p(x) + 3 e^x q(x) -3 e^{2x}q(-x)-e^{3x}p(-x)}{72 (e^x-1)^3},\quad \begin{minipage}{5cm}
$p(x)=72 + 63 x + 22 x^2 + 3 x^3$,\\
$q(x)=-72 - 21 x + x^3$.
\end{minipage}
\end{align*}
Let us analyze the sum in~\eqref{eq:the last term of quadratic terms}. Note that for fixed $a\ge 1$ we have
$$
\sum_{k=0}^{2g-2}\frac{B_k}{k!}\frac{(2g-2)!}{(2g-k)!}(k-1)a^{2g-k}=S_{2g-1}(a)-aS_{2g-2}(a)=\sum_{i=1}^{a-1}(i-a)i^{2g-2},\quad g\ge 2,
$$
and therefore
\begin{align*}
\sum_{g\geq 2}\mu^{2g-3}\frac{B_{2g-2}}{6}\sum_{k=0}^{2g-2}\frac{B_k}{k!}\frac{k-1}{(2g-k)!}a^{2g-k}=&\frac{1}{6}\sum_{i=1}^{a-1}(i-a)\sum_{g\ge 2}\frac{B_{2g-2}}{(2g-2)!}i^{2g-2}\mu^{2g-3}=\frac{1}{6\mu}\sum_{i=1}^{a-1}(i-a)\phi(\mu i).
\end{align*}

\medskip

\begin{remark}\label{remark:divergent term}
We see that all the terms in the expression for $\oG_3'$ have a limit when $\mu\in\mbR$ and $\mu\to+\infty$, except the term~\eqref{eq:divergent term}, which is equal to $-i\hbar(1-i\hbar)\sum_{a>0}f_{1,2}(\mu)q_{a} p_a$. 
\end{remark}

\medskip

So for $\oG_3^\bullet$ we finally get the expression
\begin{align*}
    \oG^\bullet_3= &\sum_{\sum_{i=1}^5 a_i=0}\frac{:\prod p_{a_i}:}{5!} + (i\hbar-1)\sum_{\sum_{i=1}^4a_i=0}\left( \frac{\sum\phi(a_i\mu)}{3\mu} + \frac{\sum_{i<j}\phi((a_i+a_j)\mu)}{12\mu}\right)\frac{:\prod p_{a_i}:}{4!}\\
    &+ \sum_{\sum_{i=1}^{3}a_i=0}i\hbar\frac{\sum a_i^2}{24}\frac{:\prod p_{a_i}:}{3!}\\
    &-\sum_{\sum_{i=1}^{3}a_i=0}(i\hbar-1)^2\left(\frac{\sum\phi(a_i\mu)}{4\mu^2}-\frac{\sum\phi(a_i\mu)^2}{6\mu^2}-\frac{\sum a_i^2}{48} - \frac{\sum_{i<j}\phi(a_i\mu)\phi(a_j\mu)}{6\mu^2}\right)\frac{:\prod p_{a_i}:}{3!}
    \end{align*}
\begin{align*}
    & + (i\hbar-1)^3\sum_{a>0}\frac{p(a\mu) + 3 e^{a\mu} q(a\mu) -3 e^{2a\mu}q(-a\mu)-e^{3a\mu}p(-a\mu)}{72 (e^{a\mu}-1)^3\mu^3} q_a p_a\\
& +i\hbar(i\hbar-1)\sum_{a>0}\lb\frac{1+5a^2}{72}\frac{\phi(a\mu)}{\mu}+\sum_{i=1}^{a-1}\frac{a-i}{6}\frac{\phi(\mu i)}{\mu}\rb q_{a} p_a.
\end{align*}
The limit is equal to
\begin{align*}
    \oG^{\lim}_3= &\sum_{\sum_{i=1}^5 a_i=0}\frac{:\prod p_{a_i}:}{5!} + (i\hbar-1)\sum_{\sum_{i=1}^4a_i=0}\left(\frac{\sum|a_i|}{6} + \frac{\sum_{i<j}|a_i+a_j|}{24}\right)\frac{:\prod p_{a_i}:}{4!}\\
    &+ \sum_{\sum_{i=1}^{3}a_i=0}\lb i\hbar\frac{\sum a_i^2}{24}+(i\hbar-1)^2\left(\frac{\sum a_i^2}{16} + \frac{\sum_{i<j}|a_ia_j|}{24}\right)\rb\frac{:\prod p_{a_i}:}{3!}\\
    & + \sum_{a>0}\lb (i\hbar-1)^3\frac{a^3}{24}+i\hbar(i\hbar-1)\frac{7a^3-a}{144}\rb q_a p_a.
\end{align*}

\medskip

Let us now check the second equation in~\eqref{eq:I and G for d2 and d3}. We will check that
\beq\label{eq:pip equation}
\pi_k\lb I^{(4)}\rb=\pi_k\lb 4!\oG_3^{\lim}- 2 \oG_0^{\lim} \oG_1^{\lim}\rb
\eeq
separately for $2\le k\le 5$.

\medskip

\underline{Proof of~\eqref{eq:pip equation} for $k=5$}. Since both sides of~\eqref{eq:pip equation}, for $k=5$, have the form $\sum_{\substack{\lambda,\mu\in\mcP,\,|\lambda|=|\mu|\\l(\lambda)+l(\mu)=5}}c_{\lambda,\mu}q_\lambda p_\mu$, $c_{\lambda,\mu}\in\mbQ$, and we have already proved equation~\eqref{eq:pip equation} when $i\hbar=1$, equation~\eqref{eq:pip equation} is true for $k=5$.

\medskip

\underline{Proof of~\eqref{eq:pip equation} for $k=4$}. We compute
\begin{align}
\sum_{i,j,k\geq 1}(i+j+k)q_{i}p_{i-j}p_{j-k}p_k =&\sum_{i,j,k\geq 1}(i+j+k)q_{i}:p_{i-j}p_{j-k}:p_k + \sum_{i>j\ge 1}(2i+j)q_{i}[p_{i-j},p_{j-i}]p_i=\label{eq:normal ordering for 4 terms}\\
=&\sum_{i,j,k\geq 1}:p_{-i}p_{i-j}p_{j-k}p_k: + \alpha\sum_{i\ge 1} \frac{i(i-1)(7i+1)}{6}q_i p_i.\notag
\end{align}
Therefore, 
$$
\pi_4\lb I^{(4)}\rb=(\alpha-1)\sum_{i,j,k\geq 1}(i+j+k):q_i p_{i-j}p_{j-k}p_k:,
$$
while
\begin{align*}
&\pi_4\lb 4!\oG_3^{\lim}\rb=(\alpha-1)\sum_{\sum_{i=1}^4a_i=0}
c(a_1,a_2,a_3,a_4):\prod p_{a_i}:,\\
&\pi_4\lb- 2 \oG_0^{\lim} \oG_1^{\lim}\rb=-(\alpha-1)\sum_{a,b>0}aq_aq_bp_ap_b,
\end{align*}
where $c(a_1,a_2,a_3,a_4):=\frac{\sum|a_i|}{6}+\frac{\sum_{i<j}|a_i+a_j|}{24}$. So we have to check that
\beq\label{eq:identity for I and G - k4}
\sum_{i,j,k\geq 1}(i+j+k):q_i p_{i-j}p_{j-k}p_k:=\sum_{\sum_{i=1}^4a_i=0}
c(a_1,a_2,a_3,a_4):\prod p_{a_i}:-\sum_{a,b>0}aq_aq_bp_ap_b.
\eeq

\medskip

We split the left-hand side of~\eqref{eq:identity for I and G - k4} as follows:
\begin{align*}
&\underbrace{\sum_{i>j>k\geq 1}(i+j+k):q_i p_{i-j}p_{j-k}p_k:}_{S_1:=}+\underbrace{\sum_{1\le i<j<k}(i+j+k):q_i p_{i-j}p_{j-k}p_k:}_{S_2:=}\\
&+\underbrace{\sum_{\substack{i,j,k\geq 1\\i>j<k}}(i+j+k):q_i p_{i-j}p_{j-k}p_k:+\sum_{\substack{i,j,k\geq 1\\i<j>k}}(i+j+k):q_i p_{i-j}p_{j-k}p_k:}_{S_3:=}.
\end{align*}
We compute
\begin{align*}
&S_1=\sum_{a,b,c\geq 1}(a+2b+3c)q_{a+b+c} p_{a}p_{b}p_c=2\sum_{a,b,c\geq 1}(a+b+c)q_{a+b+c} p_{a}p_{b}p_c,\\
&S_2=\sum_{a,b,c\geq 1}(3a+2b+c)q_{a} q_b q_c p_{a+b+c}=2\sum_{a,b,c\geq 1}(a+b+c)q_{a} q_b q_c p_{a+b+c},
\end{align*}
and then easily check that
$$
S_1+S_2=\sum_{\substack{\sum_{i=1}^4a_i=0\\|\{i|a_i>0\}|=1\text{ or }3}}
c(a_1,a_2,a_3,a_4):\prod p_{a_i}:.
$$
So it remains to prove that
\beq\label{eq:identity for S3 for k4}
S_3=\sum_{\substack{\sum_{i=1}^4a_i=0\\|\{i|a_i>0\}|=2}}
c(a_1,a_2,a_3,a_4):\prod p_{a_i}:-\sum_{a,b>0}aq_aq_bp_ap_b.
\eeq

\medskip

We compute
\begin{align}
&S_3=\sum_{\substack{a,b,c,d\ge 1,\,a>c\\a+b=c+d}}(2a-c+d)q_aq_bp_cp_d+\sum_{\substack{a,b,c,d\ge 1\\a+b=c+d}}(2a+b+d)q_a q_{b}p_{c}p_d=\notag\\
&=\hspace{-0.2cm}\sum_{\substack{a,b,c,d\ge 1\\a+b=c+d\\d<a>c}}\hspace{-0.2cm}(6a+4b)q_aq_bp_cp_d+\hspace{-0.2cm}\sum_{\substack{a,b,c,d\ge 1\\a+b=c+d\\d>a>c}}\hspace{-0.2cm}(4c+6d)q_aq_bp_cp_d+\sum_{a>b\ge 1}(11a+7b)q_aq_bp_bp_a+\sum_{a\ge 1}4a q_aq_ap_ap_a.\label{eq:4 terms for S3}
\end{align}
Similarly, we decompose the right-hand side of~\eqref{eq:identity for S3 for k4} into the following sum:
\begin{align*}
&\sum_{\substack{\sum_{i=1}^4a_i=0\\a_1,a_2<0,\,a_3,a_4>0\\\max(|a_1|,|a_2|)>\max(|a_3|,|a_4|)}}\hspace{-0.5cm}6\,c(a_1,a_2,a_3,a_4):\prod p_{a_i}:+\hspace{-0.5cm}\sum_{\substack{\sum_{i=1}^4a_i=0\\a_1,a_2<0,\,a_3,a_4>0\\\max(|a_1|,|a_2|)<\max(|a_3|,|a_4|)}}\hspace{-0.5cm}6\,c(a_1,a_2,a_3,a_4):\prod p_{a_i}:+\\
&+\left[\sum_{\substack{a_1,a_2<0,\,a_3,a_4>0\\a_1\ne a_2,\{-a_1,-a_2\}=\{a_3,a_4\}}}6\,c(a_1,a_2,a_3,a_4):\prod p_{a_i}:-\sum_{a,b\ge 1,\,a\ne b}aq_aq_bp_ap_b\right]+\\
&+\left[\sum_{-a_1=-a_2=a_3=a_4\ge 1}6\,c(a_1,a_2,a_3,a_4):\prod p_{a_i}:-\sum_{a\ge 1}aq_aq_ap_ap_a\right].
\end{align*}
It is easy to check that these four summands are equal, respectively, to the four summands in line~\eqref{eq:4 terms for S3}, and, thus, equation~\eqref{eq:identity for S3 for k4} is proved.

\medskip

\underline{Proof of~\eqref{eq:pip equation} for $k=3$}. We have
$$
\pi_3\lb I^{(4)}\rb=\pi_3\lb\sum_{i,j,k,l\ge 1}q_i p_{i-j}p_{j-k}p_{k-l}p_l\rb+(\alpha-1)^2\sum_{i,j\geq 1}(i^2+ij+j^2)q_{i}p_{i-j}p_j.
$$
and 
\begin{align*}
&\pi_3\lb 4!\oG_3^{\lim}\rb=\sum_{a,b\geq 1}\left( \alpha (a^2 + ab + b^2) + (\alpha-1)^2(2(a^2+b^2)+3ab) \right)(q_{a+b}p_a p_b + q_a q_b  p_{a+b}),\\
&\pi_3\lb-2\oG_0^{\lim}\oG_1^{\lim}\rb=-\sum_{a,b\geq 1} \alpha (a + b) (q_{a+b}p_a p_b + q_a q_b  p_{a+b}).
\end{align*}
Since equation $\pi_3\lb I^{(4)}\rb=\pi_3\lb 4!\oG_3^{\lim}- 2 \oG_0^{\lim} \oG_1^{\lim}\rb$ is proved for $\alpha=1$, we conclude that
$$
\pi_3\lb\sum_{i,j,k,l\ge 1}q_i p_{i-j}p_{j-k}p_{k-l}p_l\rb=\alpha\sum_{a,b\geq 1}\left(a^2 + ab + b^2-a-b\right)(q_{a+b}p_a p_b + q_a q_b  p_{a+b}).
$$
It remains to check that
\beq\label{eq:identity for I and G - k3}
\sum_{i,j\geq 1}(i^2+ij+j^2)q_{i}p_{i-j}p_j=\sum_{a,b\geq 1} \left(2(a^2+b^2)+3ab \right)(q_{a+b}p_a p_b + q_a q_b  p_{a+b}).
\eeq
Indeed, we transform the left-hand side as follows:
\begin{align*}
&\sum_{i>j\geq 1}(i^2+ij+j^2)q_{i}p_{i-j}p_j+\sum_{1\le i<j}(i^2+ij+j^2)q_{i}p_{i-j}p_j=\\
=&\sum_{k,j\geq 1}((j+k)^2+(j+k)j+j^2)q_{j+k}p_{k}p_j+\sum_{i,k\geq 1}(i^2+i(i+k)+(i+k)^2)q_{i}q_k p_{i+k}=\\
=&\sum_{a,b\geq 1} \left(2(a^2+b^2)+3ab \right)(q_{a+b}p_a p_b + q_a q_b  p_{a+b}),
\end{align*}
and, thus, equation~\eqref{eq:identity for I and G - k3} is proved.

\medskip

\underline{Proof of~\eqref{eq:pip equation} for $k=2$}. By~\eqref{eq:normal ordering for 4 terms}, on the left-hand side we have
$$
\pi_2(I^{(4)})=\sum_{i\geq 1}\left((\alpha-1)^3 i^3 +
    \alpha(\alpha-1)\frac{7i^3-6i^2-i}{6}\right)q_{i}p_i,
$$
while on the right-hand side we have
\begin{align*}
&\pi_2\lb 4!\oG_3^{\lim}\rb= \sum_{i>0}\left((\alpha-1)^3i^3 + \alpha(\alpha-1)\frac{7i^3-i}{6}  \right)q_{i}p_i,\\
&\pi_2\lb-2\oG_0^{\lim}\oG_1^{\lim}\rb=-\sum_{i>0}\alpha(\alpha -1)i^2 q_i p_i.
\end{align*}
This completes the proof of equation~\eqref{eq:pip equation}.


\medskip

\begin{thebibliography}{BDGR20}

\bibitem[Blo22]{Blo22} X. Blot. {\it The quantum Witten--Kontsevich series and one-part double Hurwitz numbers}. Geom. Topol.~26 (2022), no.~4, 1669--1743.

\smallskip

\bibitem[BB24]{BB24} X. Blot, A. Buryak. {\it Quantum intersection numbers and the Gromov--Witten invariants of $\mathbb{CP}^1$}. Lett. Math. Phys. 114 (2024), no. 6, paper no.~131.

\smallskip

\bibitem[BLS24]{BLS24} X. Blot, D. Lewa\'nski, S. Shadrin. {\it Cohomological representations of quantum tau functions}. arXiv:2411.03499.

\smallskip

\bibitem[Bur15a]{Bur15a} A. Buryak. {\it Double ramification cycles and integrable hierarchies}. Comm. Math. Phys.~336 (2015), no.~3, 1085--1107.

\smallskip

\bibitem[Bur15b]{Bur15b} A. Buryak. {\it Dubrovin-Zhang hierarchy for the Hodge integrals}. Commun. Number Theory Phys.~9 (2015), no.~2, 239--271. 

\smallskip

\bibitem[BR16]{BR16} A. Buryak, P. Rossi. {\it Double ramification cycles and quantum integrable systems}. Lett. Math. Phys. 106 (2016), no. 3, 289--317.

\smallskip

\bibitem[BR25]{BR25} A. Buryak, P. Rossi. {\it Deformations of the Riemann hierarchy and the geometry of $\oM_{g,n}$}. Int. Math. Res. Not. IMRN (2025), no. 20, rnaf315.

\smallskip

\bibitem[BSSZ15]{BSSZ15} A. Buryak, S. Shadrin, L. Spitz, D. Zvonkine. {\it Integrals of $\psi$-classes over double ramification cycles}. Amer. J. Math. 137 (2015), no. 3, 699--737.

\smallskip

\bibitem[BXY26]{BXY26} A. Buryak, J. Xu, D. Yang. {\it Bihamiltonian tests for integrable systems associated to rank-$1$ F-CohFTs}. arXiv:2601.13203.

\smallskip

\bibitem[dCR19]{dCR19} A. du Crest de Villeneuve, P. Rossi. {\it Quantum $D_4$ Drinfeld--Sokolov hierarchy and quantum singularity theory}. J. Geom. Phys.~141 (2019), 29--44.

\smallskip

\bibitem[BDGR20]{BDGR20} A. Buryak, B. Dubrovin, J. Gu\'er\'e, P. Rossi. {\it Integrable systems of double ramification type}. Int. Math. Res. Not. IMRN (2020), no.~24, 10381--10446.

\smallskip

\bibitem[Eli07]{Eli07} Y. Eliashberg. {\it Symplectic field theory and its applications}. International Congress of Mathematicians. Vol. I, 217--246. European Mathematical Society (EMS), Z\"urich, 2007.

\smallskip

\bibitem[FP00]{FP00} C. Faber, R. Pandharipande. {\it Logarithmic series and Hodge integrals in the tautological ring}. With an appendix by Don Zagier. Dedicated to William Fulton on the occasion of his 60th birthday. Michigan Math. J.~48 (2000), 215--252.

\smallskip

\bibitem[FP03]{FP03} C. Faber, R. Pandharipande. {\it Hodge integrals, partition matrices, and the $\lambda_g$ conjecture}. Ann. of Math. (2) 157 (2003), no.~1, 97--124.

\smallskip

\bibitem[vIR24]{vIR24} J.-W. M. van Ittersum, G. Ruzza. {\it Quantum KdV hierarchy and quasimodular forms}. Commun. Number Theory Phys.~18 (2024), no.~2, 405--439.

\smallskip

\bibitem[vIR25]{vIR25} J.-W. M. van Ittersum, G. Ruzza. {\it Quantum KdV hierarchy and shifted symmetric functions}. Int. Math. Res. Not. IMRN 2025, no.~9, paper no. rnaf102. 

\smallskip

\bibitem[Kon92]{Kon92} M. Kontsevich. {\it Intersection theory on the moduli space of curves and the matrix Airy function}. Comm. Math. Phys.~147 (1992), no.~1, 1--23.

\smallskip

\bibitem[NS13]{NS13} M. Nazarov, E. Sklyanin. {\it Integrable hierarchy of the quantum Benjamin--Ono equation}. SIGMA Symmetry Integrability Geom. Methods Appl. 9 (2013),~078.

\smallskip

\bibitem[Pog03]{Pog03} A. K. Pogrebkov. {\it The boson--fermion correspondence and quantum integrable and dispersionless models}. Russian Math. Surveys 58 (2003), no.~5, 1003--1037.

\smallskip

\bibitem[RY21]{RY21} G. Ruzza, D. Yang. {\it On the spectral problem of the quantum KdV hierarchy}. J. Phys. A~54 (2021), no.~37, paper no.~374001.

\smallskip

\bibitem[Spe26]{Spe26} P. Spelier. {\it Polynomiality of the double ramification cycle}. Algebraic Geometry 13 (2026), no.~4, 588--597.

\smallskip

\bibitem[Wit91]{Wit91} E. Witten. {\it Two-dimensional gravity and intersection theory on moduli space}. Surveys in differential geometry (Cambridge, MA, 1990), 243--310, Lehigh Univ., Bethlehem, PA, 1991.

\end{thebibliography}
\end{document}